\newtheorem{theorem}{Theorem} 
\newtheorem{lemma}[theorem]{Lemma}
\providecommand{\U}[1]{\protect\rule{.1in}{.1in}}
\begin{document}

\title[Refundable Income Annuities]{Refundable Income Annuities: \\ Feasibility of Money-Back Guarantees}

\author[Milevsky]{Moshe A. Milevsky }
\author[Salisbury]{Thomas S. Salisbury}
\thanks{Milevsky is a Professor of Finance at the Schulich School of Business, York University, and Executive Director of the IFID Centre. Salisbury is a Professor in the Department of Mathematics and Statistics at York University. The authors acknowledge funding from the IFID Centre (Milevsky) and from NSERC (Salisbury) as well as helpful comments from Narat Charupat, Branislav Nikolic, and participants at the (virtual) IME Congress in early July 2021. The corresponding author (Salisbury) can be reached at: salt@yorku.ca}

\date{Version: 26 August 2021}

\begin{abstract}[Short version:] {\em Refundable} income annuities (IA), such as {\em cash-refund} and {\em instalment-refund}, differ in material ways from the {\em life-only} version beloved by economists. In addition to lifetime income they guarantee the annuitant or beneficiary will receive their money back albeit slowly over time. We document that {\em refundable} IAs now represent the majority of sales in the U.S., yet they are mostly ignored by insurance and pension economists. And, although their pricing, duration, and money's-worth-ratio is complicated by {\em recursivity} which will be explained, we offer a path forward to make refundable IAs tractable.

A key result concerns the market price of cash-refund IAs, when the actuarial present value is grossed-up by an insurance loading. We prove that price is counterintuitively no longer a declining function of age and older buyers might pay more than younger ones. Moreover, there exists a threshold valuation rate below which no price is viable. This may also explain why inflation-adjusted IAs have all but disappeared. 
\end{abstract}

\maketitle
\thispagestyle{empty}

\newpage
\setcounter{page}{1}

\hbox{  }
\vspace{0.5in}

\begin{singlespace}
\noindent 
{\bf Title: Refundable Income Annuities: \\ Feasibility of Money-Back Guarantees}
\end{singlespace}

\vspace{0.1in}

\noindent  {\bf Authors: Moshe A. Milevsky and Thomas S. Salisbury}

\vspace{0.5in}

\noindent {\bf Abstract} {\it [Long version]: }

This paper offers some basic theorems and observations related to: (i) cash refund and (ii.) instalment refund income annuities (IAs). We distinguish between the {\em price} paid by annuitants and the {\em actuarial present value} of the future payments received by the annuitant. These differ when a {\em loading} is applied, but are obviously the same in the unloaded case. We start by proving that unloaded prices are a declining function of issue age $x$ and valuation rates $r$, neither of which are trivial statements when dealing with refundable products. We also compute measures of {\em annuity duration} and show how they differ from traditional interest rate sensitivity, again due to the refund feature.

\vspace{0.1in}

In terms of contribution, our main (novel) result relates to loaded cash-refund IAs, that is once the {\em value} is grossed-up by $(1+\pi)$ and converted to a market {\em price} paid by or charged to annuitants. Under mild assumptions the price of a cash-refund IA is no longer a declining function of age and older annuitants might have to pay more than younger ones for the same lifetime of income. More importantly, there exists a threshold valuation rate $r_{\pi,x}$, at which the pricing function of a {\em cash-refund} IA is no longer viable. Practically speaking, if-and-when the insurance company or pension fund assumed investment return (i.e. portfolio yield) is projected to fall under $r_{\pi,x}$, they are no longer able to offer {\em refundable} IAs at any price. Finally, our paper might help explain why inflation-adjusted income annuities have all-but disappeared from the retirement landscape. Real interest rates are currently far under our $r_{\pi,x}$, which means that real refunds are no longer feasible.

\vspace{0.5in}

\begin{singlespace}
\begin{flushleft}
{\bf JEL:}  G12 (Asset Pricing), G22 (Actuarial Studies), G52 (Insurance) \\
{\bf Keywords:} 	Pensions, Annuities, Retirement Planning, Bond Duration.
\end{flushleft}
\end{singlespace}

\newpage

\section{Introduction and Motivation}
\label{sec:intro}

Per the SECURE Act of 2019, also known as the {\em Setting Every Community Up for Retirement Enhancement Act}, the U.S. Department of Labor has been tasked with creating a set of technical regulations and mathematical assumptions for reporting the account balance of all Defined Contribution (DC) retirement plans in the U.S. as an income stream, together with the usual mark-to-market value of all securities held at the end of the quarter\footnote{See, for example, https://www.groom.com/resources/lifetime-income-provisions-under-the-secure-act/}. 

\vspace{0.1in}

This regulation comes into effect in late September 2021 and will affect 76.8 million Americans with an ERISA-regulated DC account. As of the effective date and assuming it goes thru as planned every single plan sponsor and administrator will be mandated to calculate how much life annuity income the balance would buy, if the participant decided to convert the lump sum into a lifetime of income. This applies to participants at all ages and not just at or near retirement. 

\vspace{0.1in}

Thus for example, a 40-year-old employee working at a meat packaging plant in Nebraska will be informed on their quarterly 401(k) statement that the \$100,000 balance in their 401(k) plan would be equivalent to (only) \$5,000 of lifetime income because the actuarial pricing factor -- the denominator by which the \$100,000 account balance is divided -- is 20.

\vspace{0.1in}

The policy objective of this entire exercise – which the Department of Labor is estimating will cost employers \$200 million to implement in just the first year -- is to nudge participants to anchor their limited attention on a standard of living instead a lump-sum of money\footnote{See https://www.govinfo.gov/app/details/FR-2020-09-18/2020-17476 for the Interim Final Rules issued by the Department of Labor and printed in the Federal Register}. This is a laudable goal, familiar to financial economists as the foundation of the life-cycle model.

\vspace{0.1in}

To be clear, the calculation by which the ongoing balance is converted into a lifetime income will be based on a {\em theoretical formula} for the value of life annuity and the math itself has been the subject of much controversy and debate over the last year. While insurance companies that hope to sell annuities to many of those 76.8 million participants welcome the spotlight on life annuities, traditional asset managers who offer their own decumulation solutions aren’t as welcoming\footnote{See for example https://www.ft.com/content/cbbd43cb-e80c-4e81-8fb2-dfced5f7b62e}.

\vspace{0.1in}

To this end, the Department of Labor has released (and then updated) a 30-page document with proposals for the discount rates, the mortality rates, the annuity riders and even the magnitude of loadings, after receiving public comments about these matters. In other words, there is an enormous amount of interest among a vast array of businesses about how to value a stream of lifetime income. Indeed, a topic that was of interest to a small number of specialized insurance actuaries has gone mainstream. Quite ironically and as most pension economists are well-aware, few if any participants in DC plans voluntarily purchase life annuities with their account balances, which is yet another interesting aspect of the proposed regulation. {\em Will it help?}

\vspace{0.1in}

Indeed, the literature has long-known and puzzled over the thin market for voluntary annuities and the fact few people actively annuitize their wealth at retirement\footnote{See \citet{M1988}}. Annuitization is the process of exchanging a lump-sum of money for a stream of lifetime income that can't be outlived, via products sold by insurance companies but also offered by traditional Defined Benefit (DB) pension funds. The formal literature began with \citet{Y1965} who proved that, absent bequest motives, a utility-maximizing consumer should annuitize all wealth. And ever since \citet{Y1965}, many explanations have been proposed for why most people don't actually do this, and just as many remedies have been offered to fix the annuity market, such as mandating income projections on DC account statements\footnote{See \citet{BPT2011}.}. 

\vspace{0.1in}

We should note that the academic annuity literature is mostly about income annuities (IAs) which are different from tax-sheltered accumulation annuities intended as long-term savings vehicles. And, although the original paper \citet{Y1965} -- and thousands of models and papers that have followed in its path -- deal with actuarial notes (or instantaneous tontines), most economist treat IAs as the nearest real-world substitute, and the above-noted proposals are about IAs.

\vspace{0.10in}

What is less known among economists -- and the main impetus for our paper -- is that of the \$8-\$10 billion of annual premium flowing into IAs in the U.S. in recent years, the majority is allocated to products with a {\em money-back guarantee}, which of course reduces the embedded longevity insurance, risk pooling and mortality credits. According to data from CANNEX Financial Exchanges\footnote{CANNEX Financial Exchanges is a data vendor that provides annuity quotes and conducts a quarterly survey of the types of annuities sold. Note that both authors have a financial relationship with CANNEX.}, the most popular type of IA quoted\footnote{Actual sales are expected to be even more skewed than quotes, since duplicate quotes are more likely to be purchasers of refundable IAs assessing the guarantee cost. See \citet{CKM2016} for more details concerning this dataset.} is one in which the annuitant {\bf or} his/her beneficiary is guaranteed their money back. 

As noted in Table \ref{salesdata} of this paper\footnote{All tables and figures are placed at the end of the paper.}, in general there are three structurally distinct types of IAs. The first class is a {\em life only} product under which death terminates the income flow. Again, those are closest to the actuarial notes of \citet{Y1965}. The second class is an IA in which payments are guaranteed for a {\em fixed number} of years, to a beneficiary if the annuitant is no longer alive. The third is {\em refundable} IAs in which the annuitant has a guarantee that their entire premium will be returned, either in a lump-sum of cash or instalments. Notice the trend in Table \ref{salesdata} over time during the last 10 years. In 2011 life only (a.k.a. Yaari) IAs were more popular than refundable IAs. But by early 2021 the majority of price quotes are in cash or instalment refund products.

\begin{center}
\fbox{{\bf Table \ref{salesdata} Placed Here}}
\end{center}

Note that our focus here -- and the data displayed in the table -- is on single-life annuities. But joint-life annuities, which we do not treat in this work, also display the same pattern over time, towards cash-refund and instalment refund features. Naturally the demand and need for a {\em money back} guarantee is reduced because it's only after the second death that the annuity payment is extinguished. Either way, we leave the analysis of two (or more) lives to other authors and papers.

\vspace{0.1in}

To be clear, here is a simple example that should explain the mechanics of a cash versus instalment IA (without loading) and by extension the recursivity problem which is the focus of this paper. Assume an IA price that is: $a=12.5$ per \$1 of lifetime income. This means that if the annuitant deposits or provides a premium of $P=\$100,000$, it leads to an income of $\$C=P/a=$ 8,000 per year for life. Now, if this particular IA is a life only product, then death extinguishes the contract and the beneficiaries aren't entitled to any residual value. But, if this is a cash-refund IA, and if the annuitant happens to die after exactly 10 years of payments the estate or beneficiary receives a lump-sum cash payout of: $(\$100,\!000-\$80,\!000) = \$20,\!000$. But, if death occurs after year 12.5, the estate receives nothing. 

\vspace{0.1in}

So, with a cash-refund IA, the premium $P$ generates a lifetime cash-flow $C$, plus a death benefit $(P-TC)_+$, where $T$ is the stochastic date of death. We now arrive at the main problem and motivation for this paper. Note that the $P$ appears on both sides of the valuation equation, hence the recursivity. Likewise, if the IA is of the installment refund type, then if at the time of death $P<TC$, the annual payments of $C$ continue until time $T=P/C$, by which time the entire premium is returned. After time $T=P/C$, which in the above case would be $T=12.5$ years, all payments cease. One again, this induces a recursivity problem, all of which will be addressed.

\vspace{0.10in}

One additional motivation for this paper (in mid 2021) is that the consumer trend towards refundable income annuities, is on a collision course with low (and possibly negative) interest rates. At the time of writing, the yields on long-term European government bonds are (mostly) negative\footnote{Source: European Central Bank \& IMF Report, December 2020. For example, in late June 2021 the 10-year rate on Greek government bonds fell below zero for the first time.}. So, it's not entirely inconceivable that negative valuation rates will eventually be used to price and sell IAs in the U.S.. But even if that scenario is farfetched, these two opposing trends raise a critical question. {\em Is it viable to sell refundable IAs in a low enough interest rate environment?} Our answer and key result is that there exists a positive lower bound to the underlying rate, below which cash-refund IAs can {\bf no longer} be offered for sale. The type of income annuities that consumers actually want to purchase simply can't exist. 

\subsection{Outline of the paper, and key results}
\label{sec:outline}

The remainder of this paper is organized as follows. In the Section \ref{sec:literature} we position our paper within the broader annuity literature, with particular link to actuarial science, pension finance and insurance economics. Then, in Section \ref{sec:Analysis} we introduce and formally present the fundamental valuation relationship or algorithm for {\em cash refund} (hereafter CR) and {\em instalment refund} (hereafter IR) income annuities. We begin that section with basic {\em life only} (hereafter LO) income annuities to set notation and terminology and then present a variety of calibrated numerical values for LOIA, IRIA and CRIA assuming Gompertz mortality values. 

\vspace{0.1in}

We should make very clear at this early juncture that our entire paper is predicated on (i.) deterministic mortality hazard rate, and (ii.) the {\em law of large number} which underlies the actuarial pricing of mortality and longevity contingent claims. In other words, we assume the insurance company that manufactures the IAs sells a sufficiently ``large'' number of policies to eliminate all {\em idiosyncratic} mortality risk and that there is no {\em systematic} mortality risk to worry about. Follow-up research will deal with ``small'' portfolios, stochastic mortality hazard rate and the ruin probabilities associated with refundable IAs. 

\vspace{0.1in}

After that, Section \ref{sec:sensitivities} examines the sensitivity of loaded and unloaded consumer prices to both mortality rates and interest rates, which leads to durations and convexity. Once again (counterintuitively) as a result of the recursiveness, we distinguish between a classical measure of Macaulay duration, that is the time-weighted average of discounted cash-flows, versus a (log) derivative with respect to valuation rates. In the case of {\em refundable} IAs these two measures differ from each other, echoing the concept of option-adjusted duration. Next, section \ref{sec:sensitivities} also explores the {\em money's worth ratio} of these products. 

\vspace{0.1in}

Having set the stage with numerical tables and figures, Section \ref{sec:theorems} contains the precise statements of our general theorems and formulae, along with their proofs.  Section \ref{sec:conclusions} concludes the paper proper. The appendix Section \ref{sec:appendix}, contains a brief R-script that can be used to value cash-refund income annuities using a simple bisection algorithm which rapidly converges on the number that equates the actuarial present value to the price. 

\vspace{0.1in}

Before we move-on to the section containing the literature review, we briefly summarize the main theoretical -- and what we believe to be are the novel -- results offered in this paper, albeit much less formally:

\begin{itemize}

\item As long as valuation discount rates denoted by $r$ are positive, there is always a unique price $a^\star$ at which an unloaded CRIA is viable. Naturally, if $r$ is negative, the buyer can never guarantee their money back, although a basic {\em life only} IA is always viable. When $r>0$ the price of the CRIA is decreasing in $r$, and decreasing in the age $x$ of issue, assuming mortality hazard rates increase with age.

\item But, if $r$ exceeds a certain level, there is a unique price $\widehat{a}^{\star}$ at which the CRIA with a given loading is viable. Below a threshold level of $r$, the loaded CRIA fails to be viable. Once again, assuming that mortality hazard rates rise with age, the loaded CRIA is viable for $x$ below a certain level, but not for larger ages. That number is in the mid-70s or early 80s, given current interest rates and typical insurance loadings.

\item Regardless of $r$ and the insurance loading, there is a unique price $a^\circ$ or $\widehat{a}^\circ$ at which an {\em instalment refund} IA is viable. It is decreasing in $r$, once again assuming an increasing hazard rate, and declining in age $x$. Typically this price $\to 0$ at advanced ages in the unloaded case, but not under loading. 

\end{itemize}

\section{Position within the literature}
\label{sec:literature}

This paper is interdisciplinary in nature and overlaps with three distinct research areas of expertise: (i.) actuarial science, (ii.) pension finance and (iii.) insurance economics. In this section we review how our paper engages with the scholarly literature in those three fields.

\subsection{Actuarial Science}
\label{sec:acsci}

Within the literature on actuarial science, the valuation and pricing of income annuities is a topic with three centuries of history, starting with the original paper by Edmund Halley published in \citet{H1693}, in which he introduced the idea of discounting for both mortality and interest and applied it the valuation of income annuities on multiple lives. In fact, by the late 18th century and certainly by the mid 19th century, the valuation and pricing of income annuities under a given (1.) mortality table and (2.) interest rate curve, was viewed as fully settled and secure by insurance actuaries\footnote{See the collection of historical articles contained in \citet{HS1995}.}. Insurance companies as well as local and national governments issued immediate, deferred and reversionary (triggered by death) annuities, all well represented in Victorian novels and literature\footnote{See \citet{P1997, P2014} as well as the popular article by \citet{M2014}.}. The actuarial curriculum for valuing income annuities has only seen minor changes during the last half-century.

\vspace{0.1in}

However, most if-not-all income annuities sold by private insurance companies and voluntarily purchased by individual consumers during the last three centuries did {\bf not} offer the money-back or cash refund feature that has become exceedingly popular in the last decade, and is the focus on this paper. Recall from Table \ref{salesdata} that a cash-refund annuity has become the most dominant rider added to income annuities. So, while historically one could always purchase an income annuity with a guarantee period of 5, 20, 15 or 25 year's period certain, those had no meaningful impact on the valuation and pricing methodology. Mathematically, the period certain simply modifies survival probability from an integrand that is less than 1, to a constant equal to 1, during the period of income certainty. And, even if the survival curve used for pricing depends on the guarantee period selected, perhaps due to anti-selection, this can easily be incorporated into the basic valuation equation by valuing a deferred income annuity together with a term certain annuity\footnote{See \citet{FP2004}, where they find selection effects based on guarantee periods.}.

\vspace{0.1in}

In contrast to these rather trivial integrals, the cash-refund income annuity generates a non-trivial and recursive relationship which embeds the initial value and price into the payout function. In fact as we noted earlier and will carefully demonstrate in Section \ref{sec:theorems}, this money-back guarantee not-only creates mathematical complexity, it might actually lead to no feasible solution. There might not be a fixed point to the valuation equation, or to put it bluntly the product will not exist if insurance loadings (profits, commissions, expenses, or risk capital) increase beyond a certain upper threshold, or valuation rates decline beyond a lower bound.

\vspace{0.1in}

Now, the classic actuarial textbooks are well-aware of the above-noted problem or difficulty with valuing the cash-refund feature of income annuities, and do make reference to an iterative procedure or scheme that is required to solve for the price of the longevity-contingent claim. But, that discussion is often relegated to an obscure appendix or perhaps a page or two of extra problems, whereas entire chapters are dedicated to the valuation of conventional income annuities\footnote{For example, the second edition of the 800 page actuarial textbook by \citet{BGHJN1997}, dedicates a page to cash refund annuities, as an exercise on page 536. The 480 page textbook by \citet{DHW2009}, mentions them briefly as an exercise on page 139, and the 370 page textbook by \citet{P2006} has a page or two devoted to them, abstractly. Needless to say, these textbooks use traditional actuarial notation, symbols and expression that are likely to be inaccessible to economists interested in valuing income annuities. The book by \citet{CT2008} mention value-protected annuities, a form of CRIA, but they don't mention the valuation problems}. More importantly, and as far as the academic contribution of this paper is concerned, we are unaware of any actuarial textbook or prior article that considers the conditions under which the loaded cash-refund income annuity price might not exist.

\vspace{0.1in}

We suspect the reason for the rather limited (historical) attention to the cash refund feature is that they simply weren't popular, and perhaps were even unavailable. It is our understanding that cash-refunds came into existence around the time of the introduction of higher estate taxes (in the US, in the 1940s and 1950s), where beneficiaries preferred a lump-sum payment at death to cover tax liabilities for illiquid assets. We have been told that (old) annuity company accounting systems were originally designed for ongoing and periodic cash-flows, not lump-sum death benefits, which created further impediments to offering cash-refund income annuities\footnote{Source: Author conversations with insurance executives, and in particular Gary Mettler.}. Needless to say, we don’t want to digress into the business history of the income annuity market in the US, but simply note the above background to provide context for why actuarial science might have neglected the cash-refund income annuity as part of its foundational pedagogy. Our paper fills that gap.

\vspace{0.1in}

To be clear, there are recent papers in the actuarial and insurance literature that examine the valuation, hedging and risk management of guarantees embedded within modern 21st century {\em variable annuities} with guaranteed living benefits (GLBs), which are akin to financial put options struck on life or death\footnote{See for example \citet{XCCC2018}, \citet{HZK2017}, \citet{MB2016}, \citet{SM2015}, \citet{NS2011}, \citet{MS2006}, and as well \citet{BKR2008} which examines the many dimensions of variable annuity guarantees.}. The variable annuity market is a multi-billion dollar industry in the US, and many orders of magnitude larger than the market for income annuities. But, as their name suggests, those papers and authors are focused on purely variable products in which separate-account investment returns fluctuate with markets and interest rates. The nature of those guarantees and the risks associated with those derivatives are quite different. And, while those products generate their own mathematical complexities -- and in some circumstances might also fail to be viable -- the focus of our paper once again is on the income annuity at the core of retirement planning. That is the annuity Edmund Halley, Richard Price, Benjamin \citet{G1825}, and Jane Austen would recognize. 

\vspace{0.1in}

In sum, for researchers interested in actuarial and insurance pricing, we contribute to the literature by showing how a simple cash-refund on an income annuity can lead to challenging and interesting mathematical problems. Practically speaking we can help explain why inflation-adjusted cash refund annuities no longer exist; real rates are simply too low. 

\subsection{Pension Finance}
\label{sec:pensions}

The second of the three strands of literature with which this paper engages, has to do with the so-called {\em money’s worth ratio} (MWR) of income annuities, and the competitiveness and efficiency  of the retail income annuity market. In the early 1990s, a number of well-known pension economists in the U.S. introduced the MWR in series of books and papers\footnote{See \citet{FW1990}, as well as the article by \citet{MPWB1999} and the subsequent book cited as \citet{BMPW2001}.}. They suggested using the relative ratio of market annuity prices to theoretical annuity model values (i.e. the MWR) as a measure of price efficiency as well as a way of detecting anti-selection by healthier annuitants\footnote{See \citet{FP2014} as well as the earlier noted \citet{FP2004}.}. The early research indicated that this ratio or number was between 80\% and 90\%, depending on the mortality tables and term structure of interest rate used to compute theoretical model prices. And, when annuitant (versus population) mortality and risk-free rates were used to compute the MWR, the ratio often exceeded 100\%. This was taken as a sign the market was efficient and functioning properly. Recent research work updating the original estimates continue to point to MWRs in the same range, despite the decline in both interest rates and mortality rates during the last three decades\footnote{Recent papers using the MWR are \citet{PS2021}, \citet{BFN2021} as well as \citet{WMHG2021}.}.

\vspace{0.1in}

The economic impetus and relevance of these MWR studies revolve around the global demise of traditional Defined Benefit (DB) pensions and the trend towards individual and portable Defined Contribution (DC) investment accounts. Recall that the default option in a DB plan is retirement income for life, which obviously resembles the payout of an income annuity. In many cases, it not only is a {\em default} option in the pension plan, it’s the {\em only} option available at retirement. For example, the income annuity provided by U.S. Social Security (or the Canadian version, CPP) cannot be taken as a lump sum in cash. In contrast, the default option in a DC plan is either a lump-sum of money at retirement or perhaps a systematic withdrawal plan with a default spending rate. Neither of those options protects the retiree against longevity risk, i.e. the risk of outliving resources. That is the main reason why income annuities have received such attention from economists over the last few decades. From a retirement policy perspective, income annuities are being promoted by both academic economists as well as insurance company executives to help replace the lost longevity insurance in DC plans. Indeed, recent US legislation around so-called safe harbours and menu options has accelerated this trend\footnote{See the SECURE Act and associated policy issues, as well as the original article by \citet{B1990} discussing the connection between pensions as a form of retirement income insurance.}.

\vspace{0.1in}

And yet most, if not all, of the published research articles and policy papers that compute or reference these MWRs as a measure of cost efficiency have focused on historical data, pricing information and quotes for {\em life only} income annuities without any cash-refund features. Just in the year 2021 alone we have come across a handful of articles that compute MWRs, all focusing exclusively on life only income annuities. This isn’t just a matter of empirical convenience or the availability of historical data. Most of the highly cited theoretical papers in the annuity allocation literature, i.e. deriving optimal dynamic strategies around annuitization, have also focused on {\em life only} annuities\footnote{Starting with \citet{Y1965}, as well as \citet{D1981}, and more recently \citet{KT2005}, \citet{MY2007}, \citet{HMMS2009}, \citet{ILM2011}, \citet{CHMM2011}, \citet{KNW2011}.}. Any reference to a guaranteed period for the income annuity is usually noted within the context of it being a fixed-income bond, and thus irrelevant to the financial economics of longevity risk pooling. 

\vspace{0.1in}

And yet, we repeat once again that those are not the income annuities consumers are buying. Rather, most of them want, demand and pay extra to receive their money back over time. Our main point here is that the MWR literature must progress to using cash-refund pricing algorithms, and our paper offers encouragement and a path forward to those researchers. In addition to our theoretical results, the appendix to this paper contains a simple bisection-based algorithm (in R) that can be used to properly value cash-refund income annuities under a Gompertz law of mortality. The mortality curve can easily be discretized, modified or flattened. 

\subsection{Insurance Economics}
\label{sec:economics}

The third and final strand of literature to which this paper contributes, relates more generally to the so-called {\em annuity puzzle} and the very low levels of voluntary annuitization observed among retirees, despite the classic \citet{Y1965} result which argues for full annuitization\footnote{For the initial statement and framing, as well as more recent empirical evidence on the extent of the puzzle, and reasons a thin annuity market might be rational, see \citet{M1988}, \citet{DBD2005}, \citet{BKMW2008}, \citet{BPT2011}, \citet{P2013}, \citet{BL2014}, and finally \citet{RS2015}.}. The annuity puzzle literature, which is just as vast as the above noted pension economic or actuarial science work, has offered a variety of both classical (market frictions and incompleteness) and behavioural (anchoring, mental accounting, loss aversion, etc.) reasons for why consumer avoid annuitization. It continues to be an active area of research.

\vspace{0.1in}

{\em Prima facie}, the fact that from a very small group who actually choose to annuitize, an even smaller group select the life only option further exacerbates the puzzle. It's a simple actuarial fact that per premium dollar, the pure {\em life only} income annuity provide the highest level of income, the greatest mortality credits and longevity pooling benefits to the annuitant. And yet, recall that in Table \ref{salesdata}, a mere 10\% of annuitants selected the life only income annuity. A full 90\% of buyers opted for some form of guarantee, and within that category a majority of those elected a cash-refund. 

\vspace{0.1in}

In fact, to add yet another layer of puzzlement, as interest rates have declined over the last decade or so, the payout from income annuities have declined across the entire spectrum of riders. The present value of everything including the cost of retirement income goes up. Still, consumers have shifted towards buying even more expensive income annuities that reduce their cash-flow and income even further. At the risk of overusing the word, we pose this as a further puzzle. Refunds are more expensive than ever, and continue to increase in popularity. 

\vspace{0.1in}

But, the valuation model and results described in this paper allow us to shed light on this matter by highlighting and isolating the embedded life insurance component within the cash-refund annuity. As such, the demand for cash-refund income annuities might be driven not-only by behavioural considerations noted above, but perhaps by rational attempts by the annuitant (buyer) to signal to the insurance company (seller) that they are not as healthy as a conventional annuitant, purely interested in protecting against longevity risk. It also might assist in maximizing a bequest value, more effectively compared to a period certain\footnote{See, for example, the article by \citet{S2008} as well as the follow-up book referenced as \citet{S2010}, in which he examines what he calls {\em refundable annuity options}, albeit with a similar rationale.} option. Stated differently, the cash refund income annuity might be a better {\em deal} for consumers, since it might reflect a lower level of anti-selection. 

\vspace{0.1in}

Furthermore, recent experimental evidence suggests that money back guarantees are more likely to be selected as part of a so-called {\em cushion effect.}\footnote{See the paper by \citet{K2016}, who makes an argument consistent with the IA riders that consumers select, and see \citet{BBM2020} for additional survey-based evidence.} And, while that would be an empirical question, we will return to this in the conclusion of the paper when we discuss further research. For now, we continue to Section \ref{sec:Analysis} where we introduce and derive the valuation and pricing technology.

\section{Actuarial analysis}
\label{sec:Analysis}

\subsection{Notation and terminology}
\label{sec:notation}

In this paper we have tried as much as possible to adhere to standard actuarial notation, but have deviated in a few places (elevating subscripts and/or introducing superscripts) to reduce clutter and crowding.

\begin{itemize}

\item $a(x,r)$ is the price at age $x$, of a single premium life only (LO) income annuity (IA) paying $\$1$ in continuous time, under a valuation rate $r$, and with no loading. Because of the absence of loading, it agrees with the actuarial present value of the LOIA. In other words, it is the present value of the payment stream, discounted for both interest rates and mortality. Or, by the law of large numbers, the cost per contract of hedging a large number of identical such annuities. We will (quite often) abbreviate it by $a(x)$, or occasionally just $a$, when the interest rate $r$ and/or age $x$ are clear from the context. 

\item ${a^{\star}}(x,r)$ is the price at age $x$, under a rate $r$, of an (unloaded) IA paying $\$1$ in continuous time, {\bf and also} guaranteeing a cash-refund (CR) at death of: ${a^{\star}}(x,r)-t$, if death occurs at $t \leq {a^{\star}}(x,r)$. Because there is no loading, it is also the actuarial present value (or {\em value}) of this annuity. Naturally: ${a^{\star}}(x,r)>a(x,r)$. Once again, we might use the simpler $a^{\star}(x)$ or just $a^{\star}$ when the context is clear.

\item $a^{\circ}(x,r)$ is the price at age $x$, under a rate $r$, of an (unloaded) IA paying $\$1$ in continuous time, guaranteeing at least $a^{\circ}(x,r)$ years of payments to the annuitant or beneficiary. Because there is no loading, it is also the actuarial present value (or {\em value}) of this annuity. This is the instalment refund income annuity (IRIA), which is effectively a life-only IA, with an additional $a^{\circ}(x,r)$ years of guaranteed payments. In Section \ref{sec:theorems} we prove that $a^{\ast}(x,r)>a^{\circ}(x,r)$ at any age $x$ and any interest rate $r>0$, and in the current section focus on valuation and numerical examples.

\item An LOIA is {\it $\tau$-period certain} if the \$1 paid continuously lasts for life or time $\tau$, whichever is greater. We denote its price as $a(x,r,\tau)$. This allows us to express an IRIA by adjusting $\tau$ recursively, that is, $a^{\circ}=a^{\circ}(x,r)$ satisfies $a^{\circ}=a(x,r,a^{\circ})$. Alternatively, $a(x,r,\tau)$ can be realized as the present value of $\$1$ for time $\tau$ (i.e. $\frac{1-e^{-r\tau}}{r}$) plus a life annuity that only starts payment at time $\tau$. The latter, called a {\it deferred} or {\it delayed} IA and abbreviated as DIA, has been the focus of intense interest among pension economists and has recently been granted special tax treatment in the U.S.\footnote{See \citet{GW2010}, \citet{P2016}, \citet{HMY2017}, \citet{AG2019}, \citet{HMM2020}, \citet{HHMNS2020}, and the regulatory exemptions granted to Qualified Longevity Annuity Contracts (QLACs)}.

\item $\pi \geq 0$ denotes the insurance loading or mark-up from {\em value} to {\em price} which is applied to the IA by the insurance company, mainly to cover profits and expenses, although in practice $\pi$ would include a safety margin to protect against adverse mortality deviations. Practically speaking, loading a pure actuarial premium by $\pi$, results in a present value of lifetime cash-flow of: $C=\frac{P}{(1+\pi)a}$, where $P$ is the premium, or the annuitized wealth. In other words, the price $\widehat{a}$ of a loaded LOIA will simply $=(1+\pi)a$. The relationship between loaded and unloaded prices will be more complicated in the case of a CRIA or an IRIA.

\item $A(x,r)$ is the price at age $x$, of a single premium {\bf insurance} policy paying $\$1$ at death, under a valuation rate $r$. This implies that $A(x,r) +r a(x,r) = 1$, which can be shown via {\em No Arbitrage} arguments. Rearranging this relationship leads to: $\frac{1}{a(x,r)}-r=\frac{A(x,r)}{a(x,r)} > 0$, which is deemed to be {\em the mortality credits} at age $x$.

\item Finally, $T_x$ denotes the remaining lifetime random variable whose density and distribution function are denoted $f_x(t)$ and $F_x(t) = \Pr[T_x \leq t]=1- \, _tp_x$. The conditional survival probability under a Gompertz law, which we will use for all our numerical examples, is: $\, _tp_x = e^{-\int_x^{x+t} \lambda_s \, ds}=\exp\{e^{(m-x)/b}(1-e^{t/b})\}$, where $m$ is the modal value and $b$ is the dispersion coefficient, and the mortality hazard rate is: $\lambda_s=(1/b)e^{(s-m)/b}.$

\end{itemize}

\subsection{Life Only IA}
\label{sec:LOIA}

Following standard actuarial methodology, 

\begin{equation}
a(x,r) \; = \;  \int_{0}^{\infty} e^{-rt} \Pr[T_x \geq t] dt \; = \; 
\int_{0}^{\infty} e^{-rt} \, (_tp_x) dt,
\label{LOIA}
\end{equation}
Under Gompertz mortality, this can be solved analytically\footnote{For example, see \citet[Ch. 6]{M2006}.} and leads to the expression:
\begin{equation}
a(x,r)  \; = \; \frac{ b \Gamma(-rb,e^{(x-m)/b}) }{ \exp\{(m-x)r-e^{(x-m)/b}\} },
\label{GAVM}
\end{equation}
where $\Gamma(A,B)$ is the incomplete Gamma function. For calibration and comparison purposes we now display values for LOIA prices assuming two different interest (valuation) rates: $r=2\%$ and $r=4\%$. We provide a short R-script in the technical appendix, which comes from integrating equation (\ref{LOIA}), and later will be compared against the instalment refund IA and cash refund IA numerical values. Table \ref{LOIA_numbers} provides a range of numerical values of $a=a(x,r)$, for $x=55,x=65,x=75$ under valuation rates $r=2\%$ and $r=4\%$. Likewise, Figure \ref{FIG1} displays the LOIA price as a function of valuation rates $r$, at ages $x=65,75$.

\begin{center}
\fbox{{\bf Table \ref{LOIA_numbers} Placed Here}}
\end{center}

Figure \ref{FIG1}a (left panel) and Figure \ref{FIG1}b (right panel) display the usual and expected declining pattern for $a$, as valuation rates are increased. We will return to this Figure when we discuss the cash-refund $a^{\star}$ and instalment refund $a^{\circ}$ values.

\begin{center}
\fbox{{\bf Figure \ref{FIG1} Placed Here}}
\end{center}

\subsection{Cash Refund IA}
\label{sec:CRIA}
We now arrive at the ``star'' of this paper, and the focus of our attention, the cash-refund income annuity (CRIA). The first thing to note is that its price must be defined recursively, since the periodic cash-flow itself depends on the original price. In the absence of loading, this can be expressed mathematically as follows: 
\begin{equation}
{a^{\star}}(x,r) \; = \;  \int_{0}^{\infty} e^{-rs} \, (_sp_x) ds
\; + \; \int_{0}^{{a^{\star}}(x,r)} \left({a^{\star}}(x,r) \, - \, s \right) e^{-rs} \, (_sp_x) \, \lambda_{(x+s)} \, ds,
\label{astar}
\end{equation}
where the right hand side represents the actuarial present value of payments. 
The annuitant receives $\$1$ for life, but if they die early, that is before the entire ${a^{\star}}(x,r)$ has been returned, the beneficiary receives a (type-of) life insurance payment consisting of the difference between the price ${a^{\star}}(x,r)$ and payments received prior to death. The price is constructed in dollars but also denotes an important time metric. Death before age $y=x+{a^{\star}}(x,r)$ triggers a (declining) death benefit. 

\vspace{0.1in}

Notice how the first integral in equation (\ref{astar}) is the basic income annuity price $a(x,r)$, and the second integral represents the non-negative life insurance component. The CRIA price minus the LOIA price can also be expressed as:
\begin{equation}
{a^{\star}}(x,r) - a(x,r) \; = \; \int_{0}^{{a^{\star}}(x,r)} \left({a^{\star}}(x,r) \, - \, s \right) e^{-rs} \, (_sp_x) \, \lambda_{(x+s)} \, ds
\label{cria_premium}
\end{equation} 
The technical appendix contains a short bisection-based algorithm for locating numerical solutions to a modified version of this equation. For example, under a Gompertz law of mortality with $m=90,b=10$ and $r=3\%$, the LOIA price at age $x=65$ is ${a^{\star}}(65,0.03)=15.25$ while the CRIA price is ${a^{\star}}(65,0.03)=16.91$, both per dollar of lifetime income. Note, once again, how in contrast to equation (\ref{GAVM}), the $a^{\star}$ appears on both sides of equation (\ref{cria_premium}) which leads to an implicit recusivity, and possibly to no solution at all. Section \ref{sec:theorems} offers a proof of existence and uniqueness, and shows that $a^{\star}(x,r)$ actually declines in both $x$ and $r$. 

\vspace{0.1in}

Nevertheless, at the age of $x=65$, under a $r=3\%$ valuation rate, the difference between the CRIA and LOIA price is $1.66$ per dollar of lifetime income, which is about $11\%$ more expensive. That is the cost of the money-back guarantee, and the extra premium that buyers are paying for this rider. Additional numerical values are presented in Table \ref{CRIA_numbers}, where they can be compared with the LOIA prices noted in Table \ref{LOIA_numbers}. Once again, the cash-refund feature is obviously a costly rider, and even more so in absolute as well as relative terms, as valuation rates decline, (e.g. 2\% versus 4\%).

\begin{center}
\fbox{{\bf Table \ref{CRIA_numbers} Placed Here}}
\end{center}

In Section \ref{sec:theorems}, we will show that \eqref{astar} can be reformulated as the following;
\begin{equation}
\int_{{a^{\star}}(x,r)}^\infty e^{-rt}{}(_tp_x)\,dt=\int_0^{{a^{\star}}(x,r)} e^{-rt}r({a^{\star}}(x,r)-t){}(_tp_x)\,dt.
\label{twoaccountversion}
\end{equation}
A particularly useful way to think about equation (\ref{twoaccountversion}), as well as the financial process by which the insurance company manages the CRIA payout, is as follows. Initially, the annuitant's entire premium ${a^{\star}}(x,r)$ is placed into a so-called payout or {\em phase-one} account, which only contains the sum that is refundable at death and is used to generate and payout the annuity income up to time ${a^{\star}}(x,r)$. But, all interest income earned within this (shrinking) {\em phase-one} account goes into -- and is accumulated in -- a second {\em phase-two} account. Eventually, the insurer uses that {\em phase-two} account to pay lifetime income after the {\em phase-one} account is emptied or depleted and the formal guarantee period is over.  Of course, insurance companies don't actually maintain two accounts for every annuitant, and in reality the entire premium goes into one large general account, but it helps to think in this bifurcated manner. With this framework, the LHS of equation (\ref{twoaccountversion}) is the discounted value of the {\em phase-two} payments. The RHS of equation (\ref{twoaccountversion}) is the income generated by the {\em phase-one} account, since $({a^{\star}}(x,r)-t)$ is the account balance, and ${}_tp_x$ is the fraction of (initial) accounts surviving.

\subsection{Instalment Refund IA}
\label{sec:IRIA}

A closer (and cheaper) cousin to the cash refund annuity (CRIA) is the installment refund annuity (IRIA), under which payments continue to the beneficiary until the entire premium is returned. Unlike the CRIA, there is no lump-sum death benefit, rather payments continue {\em as if} the annuity included a period certain of $a^{\circ}(x,r)$ years. Going back to the fundamental valuation expression, payments for the first $t \leq a^{\circ}(x,r)$ years are only discounted for interest (not mortality), and payments after $t > a^{\circ}(x,r)$ are discounted for both interest and mortality. Mathematically this can be expressed as:
\begin{align}
a^{\circ}(x,r) \;  &=  \;  \int_{0}^{a^{\circ}(x,r)} e^{-rs} ds
 +  \; \int_{a^{\circ}(x,r)}^{\infty} e^{-rs} \, (_sp_x) \, ds   \nonumber \\ 
 &  =
 \begin{cases}
 \frac{1}{r} (1-e^{-r a^{\circ}(x,r)})
 +  \; \int_{a^{\circ}(x,r)}^{\infty} e^{-rs} \, (_sp_x) \, ds, & r > 0, \\
 a^{\circ}(x,r)
 +  \; \int_{a^{\circ}(x,r)}^{\infty} \, (_sp_x) \, ds,  & r  = 0.
\label{alpha}
\end{cases}
\end{align}
IRIA values under the same parameters as before are shown in Table \ref{IRIA_numbers}. Equation (\ref{alpha}) is recursive in $a^{\circ}(x,r)$, but unlike the CRIA price ${a^{\star}}(x,r)$ it is mathematically simpler. One can immediately see that under $r=0$, $a^{\circ}(x,r)=\infty$, and the IA isn't viable. 

\begin{center}
\fbox{{\bf Table \ref{IRIA_numbers} Placed Here}}
\end{center}

Here is a scheme for locating $a^{\circ}(x,r)$, if you have a convenient or closed-form expression, such as the Gompertz formula in equation (\ref{GAVM}), but for an IA with a $\tau$-year period certain (PC). First, note that $a(x,r,\tau)$ is increasing in $\tau$, and is $>\tau$, when $\tau=0$. But, $a(x,r,\tau) \rightarrow \frac{1}{r} < \tau$ as $\tau \rightarrow \infty$, as long as $r>0$. So, increase $\tau$ until the curves cross and $a(x,r,\tau) = \tau$. In Section \ref{sec:theorems} we will use a related argument to also show uniqueness, and to handle loadings. 

\subsection{Insurance loading by $\pi$}
\label{sec:loadings}

Now, let's return to the subject of insurance loadings. In contrast to the LOIA computed in the prior subsection, it would be inappropriate and technically incorrect to multiply ${a^{\star}}(x,r)$ by an insurance loading of $(1+\pi)$ to convert them from unloaded to loaded prices. Why? Because by charging or adding the extra loading, the insurer is obligated to pay a greater death benefit, and for longer, even though their overall obligation to pay \$1 for life hasn't changed. Using the above terminology, the {\em phase-one} account must last longer. And, to preempt our main result, sometimes this isn't possible. 

\vspace{0.1in}

Let $\pi \geq 0$ denote the proportional insurance loading added to the actuarial present value of a CRIA to yield its price. If $\widehat{a}^{\star}$ denotes the CRIA price, then the actuarial present value is $\frac{\widehat{a}^\star}{1+\pi}$,  so $\widehat{a}^{\star}$ must satisfy: 

\begin{equation}
\frac{\widehat{a}^{\star} }{1+\pi}= a
\; + \; \int_{0}^{\widehat{a}^{\star}} \left(\widehat{a}^{\star} \, - \, s \right) e^{-rs} \, (_sp_x) \, \lambda_{(x+s)} \, ds.
\end{equation}
For given values of $(x,m,b,\pi,r)$ it's unclear a solution exists. And, as a consequence of the results in Section \ref{sec:theorems}, assuming Gompertz mortality with parameters $m=90$ and $b=10$, you can {\bf not} sell a cash-refund immediate annuity (CRIA) if the valuation rate ($r$) falls below a certain level. Discounted expectations won't exist. The lowest viable valuation rate under which the Cash Refund Income Annuity (CRIA) is still feasible will be denoted by $r_{\pi, x}$, expressed in basis points -- see Table \ref{viabler}. Thus, for example,  with Gompertz Mortality ($m=90$ and $b=10$) an inflation-adjusted CRIA would not be feasible at $x=75$, under a loading of $\pi=15\%$, if real rates used for pricing fall under $r_{0.15, 75}=1.01\%$.

\begin{center}
\fbox{{\bf Table \ref{viabler} Placed Here}}
\end{center}

Loaded prices for IRIAs are defined in the same way, so we have expressions (or at least algorithms) for all three IAs. Figure \ref{FIG2} displays LOIA, IRIA and CRIA prices assuming Gompertz mortality with $(m=90,b=10)$, and a valuation interest rate of $r=2\%$. The dashed curves capture the prices without loading, while the solid lines show the loaded prices. At younger ages (visually under age 40) all three prices are nearly identical and the guarantees or refunds don't make much of a difference in value. But, at older ages the IRIA and CRIA are (relative) more expensive. 

\vspace{0.1in}

Notice how at the age of $x=60$ the IRIA and CRIA prices are approximately 10\% higher than the LOIA prices. But, at age $x=75$ and beyond, when mortality credits start to ``kick in'', the premium for refunds can exceed 50\% to 70\% of the LOIA price. And at higher ages, the CRIA price starts increasing with $x$, and is eventually not viable.

\vspace{0.1in}

Observe that ${a^{\star}}(x,r)>a^{\circ}(x,r)$ and ${\widehat{a}^{\star}}(x,r)>\widehat{a}^{\circ}(x,r)$ here.  We will see in Section \ref{sec:theorems} that these relations always hold. This is plausible in the unloaded case, but seems much less obvious under loading. Note also that while the two LOIA prices agree asymptotically as $x\to\infty$, in other words whether the annuity is loaded for costs and profits, or sold with no loading at all, the limiting price at advanced ages is the same. Yet, this is not the case for IRIAs and we explore this fully in Section \ref{sec:theorems}.

\begin{center}
\fbox{{\bf Figure \ref{FIG2} Placed Here}}
\end{center}

\section{Sensitivity, duration, and money's worth ratio} 
\label{sec:sensitivities}

\subsection{Impact of age and rates}
\label{sec:duration}

In this section we discuss the ``sensitivities'' of the life-only (LO), instalment refund (IR) and cash-refund (CR) income annuity (IA) prices, relative to changes in the underlying parameters age $x$ and valuation rate $r$. Recall from section \ref{sec:Analysis} that although we have a closed-form analytic expression for the LO prices $a(x,r)$ in a Gompertz framework, the IR price $a^{\circ}(x,r)$ and CR price ${a^{\star}}(x,r)$ can only be computed via iteration. Those iterations also lead to a rather complex process for computing partial derivatives, which among other interesting quantities are the core ingredient of so-called duration and convexity calculations. 

\vspace{0.1in}

To complicate matters further, although for LO the loaded price $\widehat{a}=(1+\pi)a$, is a simple multiple of the unloaded price $a$, that is not the case for the other two annuities. For IRIA and CRIA the loaded prices, which we denoted in Section \ref{sec:Analysis} by $\widehat{a}^{\circ}$ and $\widehat{a}^{\star}$, are no longer linear multiples $(1+\pi)$ of the $a^{\circ}$ and ${a^{\star}}$. Therefore ``sensitivities'' will not scale either, and if $r_{\pi,x}$ exceeds $r$ they might not exist at all, for the CRIA.

\vspace{0.1in}

Figure \ref{FIG3} displays the sensitivities of IA prices (loaded and unloaded) as a function of age. Notice in panel (a) that (in the unloaded case), the sensitivities $\frac{\partial}{\partial x}$ of all three IAs are consistently negative, consistent with prices declining in age $x$. Interestingly, notice how prior to the age of (approximately) 85, aging reduces the LO price at a greater rate than the CR price (the derivative is negative and lower). But after that age the situation is reversed and the reduction in the LOIA price slows down while the CRIA reduces at faster rate. In contrast to the story told in panel (a), the age derivative of the loaded CRIA price $\widehat{a}^{\star}$ in panel (b) is consistently above that of the loaded LOIA price $\widehat{a}$. In fact, and this once again ties into our earlier result, at some age the derivative goes positive. Once again, this means that older people will have to pay more (not less) for the same dollar of lifetime income.

\begin{center}
\fbox{{\bf Figure \ref{FIG3} Placed Here}}
\end{center}

Turning to the impact of interest rates $r$, we wish to know which of the three IAs has the largest or greatest percentage sensitivity to a change $\Delta r$ in interest rates. Or, in the language of bond pricing, at any given age, whose duration is highest? While we measured the sensitivity to age $x$ as the straight partial derivative $\partial / \partial x$, as discussed above, for valuation rates $r$ we will proceed in a manner that is analogous with the bond literature. The new (key) symbol is $Dr[\cdot]$, which will operate on either the loaded or unloaded prices. So, from here on, will focus attention on the following mathematical expression:

\begin{equation}
Dr[a]:=\frac{-\partial a / \partial r}{a}, 
\label{AnnuityDuration}
\end{equation}
and similarly for the other IAs. The above expression in (\ref{AnnuityDuration}) has been labeled {\em life annuity duration} in the insurance literature\footnote{See, for example, \citet{CKM2016}.} and we adhere to that definition in this paper. Its convenience and use derives from the well-known approximation that:
\begin{equation}
\frac{a(x,r+\Delta r)}{a(x,r)}-1= -Dr[a(x,r)] \times \Delta r
\end{equation}
When applied to the standard life-only IA $a$, the life annuity duration $Dr[a]$ ``operator'' will collapse to the well-known Maculay duration -- a weighted and discounted time average of cash flows. In fact, the loading $(1+\pi)$ will cancel and the $Dr[a]$ is identical for the loaded and unloaded prices. However, the $Dr[\cdot]$ will not correspond with the Macaulay duration in the case of the IRIA and CRIA prices, once again due to the same culprit, recursivity. We will get back to that in a moment, and despite the popularity (and history) of Macaulay, we limit our analysis and comments to duration defined as $Dr[\cdot]$.

\vspace{0.10in}

The rationale for the $Dr[a]$ as our definition of life annuity duration -- even for the IRIA and CRIA -- can be easily understood when it's applied to a simple (no mortality) present value (a.k.a. zero coupon bond) maturing at \$1. In that case $a:=e^{-rT}$ and $Dr[a]=T$, which is evidently the {\bf time} until receipt of the payment. This is why units of $Dr[a]$, and by extension $D\lambda[a]$, are quoted in years. Moreover, if we replace $r$ by $r+\Delta r$, the percentage change in the (new) value of $a$ can be approximated to first order (Taylor expansion of $e^{\Delta T}$) by $\Delta rT$ percent. Now, to continue developing intuition for our life annuity duration, let's take the case of a life only IA in which the force of mortality is constant and lifetimes are exponentially distributed. The IA price is: $a:=\int_{0}^{\infty}e^{-(r+\lambda)s}ds=(r+\lambda)^{-1}$, and from the definition in \eqref{AnnuityDuration}, the $Dr[a]=a$. In this case, the percentage change in the IA can be approximated by $-a \Delta r$, for small enough values of $\Delta r$. For example, if $r=3\%$ and $\lambda=2\%$ (which is a life only IA issued at a life expectancy of 50 years), the IA price is $20$ dollars and the life annuity duration is (also) $20$ years. If we add $+25$ basis points to the interest rate $r$ the new IA price is simply: $19.0476=1/(0.0525)$ which is a decline of $4.76\%$. In contrast, using the life annuity duration approximation, multiplying the $-Dr[a]=20$ by $\Delta r = 0.0025$ is a change of $-5\%$. We would like to obtain similar expressions (and approximations) for the IRIA and CRIA prices.

\vspace{0.10in}

 As for the relationship between $Dr[a]$ and Macaulay duration we offer the following observation. Regardless of the exact IA type, if cash flows from the insurance company to the annuitant at a rate $\phi_t$ at time $t$, the associated Macaulay duration is $\frac{\int_0^\infty t e^{-rt}\phi_t\,dt}{\int_0^\infty e^{-rt}\phi_t\,dt}$, measured in units of time. In other words the arrival times $t$ are weighted by the present values $e^{-rt}\phi_t$ of the associated cash flows. So, in the case of an LOIA, if ${a}(r)=\int_0^\infty e^{-rt}\phi_t\,dt$, Macaulay duration is $-\frac{1}{{a}}\frac{d{a}}{dr}$, which is consistent with $Dr[{a}]$ noted above in \eqref{AnnuityDuration}. However, this is no longer the case for IRIA and CRIA. Macaulay duration, as defined above, is simple to compute -- just hold ${a^{\star}}$ constant in the right hand side of \eqref{astar}, while differentiating with respect to $r$. This no longer equals $Dr[a]$, b/c changing $r$ does {\it not} leave ${a^{\star}}$ constant, so the expression loses its interpretation as a sensitivity to interest rates. Nevertheless, we will continue to report life annuity duration in years.

\subsection{Numerical examples}
\label{sec:examples}

Here we provide some intuition for life annuity duration. For starters, it should be clear from the left-hand panel of Figure \ref{FIG4}, that at younger ages the life annuity durations of all three {\em unloaded} products are virtually indistinguishable from each other. The values of $Dr[a]$, as well as $Dr[a^{\circ}]$ and $Dr[{a^{\star}}]$ decline with age $x$, but much more rapidly for the LOIA. Indeed, the life annuity durations are in the vicinity of 10-15 years around the retirement ages, which is what one might expect from a fixed income product that pays ``on average'' until the period ends. Note that at higher issue ages, and certainly during the traditional period of retirement between age 65 and 85, the life annuity duration of the CR and IA rider is 8-10 years higher (larger, more sensitive) than the LO version. This depends on parameter values (in this case $m=90,b=10,r=2\%$) but is a feature of all refunds at advanced ages. Higher duration has immediate and practical applications for hedging and ALM.

\begin{center}
\fbox{{\bf Figure \ref{FIG4} Placed Here}}
\end{center}

Now, the situation is markedly different and rather counter-intuitive for the loaded IA prices. The right-hand panel of Figure \ref{FIG4} display the $Dr[\widehat{a}]$ for the LOIA, together with $Dr[\widehat{a}^{\circ}]$ for the IRIA and $Dr[\widehat{a}^{\star}]$ for the CRIA. A number of odd things can immediately be observed. First, at younger ages, the life annuity duration for the IRIA and CRIA are now lower and under the LOIA curve. They actually cross over -- for these particular parameter values -- somewhere around the age of $x=40$. After that point the life annuity durations rank in the same relative order as the unloaded prices. But, the oddities continue. While the LOIA duration continues to decline, and is unaffected by the $\pi$ loading which cancels out in the equation (derived in Section \ref{sec:theorems}), the IRIA and CRIA life annuity durations begin to increase somewhere around the age of $x=70$. In fact, the CRIA duration -- again, these are loaded prices -- asymptotes around the critical age at which the value no longer exists. Once again, we refer readers to Section \ref{sec:theorems} to better understand these aspects of the fundamental pricing equation.
 
 \subsection{Money's Worth Ratio}
 \label{sec:MWR}

We conclude this section with a brief discussion of the implications of our work for the computation of the so-called {\em Money’s Worth Ratio} (MWR) which, as we noted in the literature review, is an area of continued interest among scholars in pension economics and finance. Generally speaking the MWR is defined as the ratio of the {\em theoretical model value} and {\em empirical market price} of an IA. If the measured ratio is less than one, which is usually the case since market prices are loaded, then one is said (by an economist) to be paying more than a fair price for the IA. If the ratio happens to be greater than one (which is quite rare) the IA is considered to be a good deal (by an economist) and if the ratio is exactly equal to one, the market price would be perfectly fair. This way or approach to analyzing annuity market prices was first introduced over 30 years ago by the economists \citet{FW1990}, has become the {\em de facto} standard in the literature and continues to be used; see for example \citet{PS2021}. Using our notation one can think of the {\em MWR} as being approximately equal to $\frac{1}{1+\pi}$. 

\vspace{0.1in}

Now clearly the numerator in the {\em MWR} ratio -- the theoretical model value of the IA -- requires an assumption for both mortality and discount rates, both of which might be more complicated than the Gompertz $(m,b)$ law and a constant valuation rate $r$. After all, one never really knows exactly what the insurance company is assuming in their pricing. It is therefore more accurate to present the {\em MWR} estimates as being conditional on a model. So, if $C$ denotes the annual cash-flow generated by the IA and $P$ denotes the market premium or price, then $P/C$ is the {\em empirical} price per dollar of lifetime income, which can be compared with the {\em theoretical} LOIA price $a$ or CRIA price $a^{\star}$. With that in mind we will define the Gompertz {\em MWR} for both the CRIA and the LOIA, naturally as follows:
\\
$$
{\tt MWR}^{\star}(x,r,m,b \mid P,C) = \frac{a^{\star}(x,r,m,b)}{P/C}, \;\;\;\;\;\; {\tt MWR}(x,r,m,b \mid P,C) = \frac{a(x,r,m,b)}{P/C}
$$
Our objective is to estimate the {\em MWR} for LOIA and CRIA, to see if there is any meaningful difference between them. Table \ref{MWRtable} displays results and this is how to interpret those numbers. A 65-year-old male who invested $P=\$100,\!000$ into an IA in early July 2021, would have been entitled to (guaranteed) approximately $C=\$5,\!844$ per year for life, paid monthly, if he selected the life only version (according to data from Fidelity Investments). But had he selected the cash refund the payout would have been only $C=\$5,\!280$ per year, for reasons that should be evident by this point.

\begin{center}
\fbox{{\bf Table \ref{MWRtable} Placed Here}}
\end{center}

Using those numbers, if we assume a Gompertz mortality model with $m=90$ years and $b=10$ years, and a constant $r=2\%$ valuation rate, the {\em MWR} estimate of the life only payout is exactly $0.996$ per premium dollar. To be clear, this number is obtained by dividing the theoretical value $a(65,0.02,90,10)$ by the empirical ratio $\frac{100000}{5844}=17.1157$. Note that we are {\bf not} claiming the insurance company uses this model and this valuation rate to price LOIAs. Rather, under those assumptions the MWR would indicate perfect fairness to an economist. In fact, we iterated $r$ until we converged to {\em MWR=1}. 

\vspace{0.1in}

But here is where it gets interesting. When we computed this ratio using the same $(r,m,b)$ parameter assumptions at age $x=65$, but using the lower CRIA payout of $C=\$5,\!280$, the ${\tt MWR}^{\star}$ estimate is actually higher than 1. In fact, the ${\tt MWR}^{\star}=1.03$ which might not seem like very much, but is actually a rather large number within the money's worth ratio literature. More importantly, all the remaining rows in Table \ref{MWRtable} tell the exact same story. While the ${\tt MWR}(x,r,m,b) \approx 1$ for the LOIA, the ${\tt MWR}^{\star}(x,r,m,b) > 1$ for the CRIA. {\em Why?}  

\vspace{0.1in}

An answer to this question, or a possible explanation for this empirical fact gets back to the economic essence of cash refund immediate annuities. Recall that there were two components or integrals in the valuation of the CRIA price $a^{\star}$. This first was the annuity payment of \$1 for life, or the basic $a$ price. On a stand-alone basis, the relevant mortality assumptions for that component would be (healthy, anti-selected) annuitant mortality.  But the second component in the construction of $a^{\star}$ is associated with a life insurance payment due upon (early) death. It's unclear that portion should be valued and discounted using annuitant mortality, given that it reflects a payment upon death. Moreover, it could very well be that buyers who select a cash refund (versus a life only) feature are in some way signalling they are less healthy than average annuitants. This then induces the insurance company to use less conservative (read: lower than $m=90$) values, which results in a higher than unity ${\tt MWR}^{\star}$, computed relative to the conservative (read: $m=90$) values.

\vspace{0.1in}

In sum, although this is all rather speculative and demands further research, it seems that when measured properly the {\em MWR} values are higher than life only versus cash refund IAs. And, since they now form the majority of sales in the U.S., we would urge the literature to focus more attention on CRIAs and their possible role as a vehicle for satisfying bequest motives in a more cost-efficient manner. 

\section{Theorems and General Formulae}
\label{sec:theorems}

\subsection{Theorems and proofs: CRIA with fair pricing}
\label{sec:fairCRIA}
In this sub-section we provide a formal and proper statement of our theorems and corresponding lemmas, as well as the proofs alluded to in the body of the paper. We start by rewriting equation \eqref{astar}, in terms of the density $f(t)$ of $T$:
\begin{equation}
{a^{\star}}=\int_0^\infty \frac{1-e^{-rt}}{r}f(t)\,dt + \int_0^{a^{\star}}({a^{\star}}-t)e^{-rt}f(t)\,dt.
\label{densityversion}
\end{equation}

Here is an alternate version, written just in terms of ${}_tp_x$:
\begin{align}
{a^{\star}}&=\int_0^\infty e^{-rt}{}_tp_x\,dt+\int_0^{a^{\star}} \int_s^{a^{\star}}e^{-rt}(1+r({a^{\star}}-t))\,dt f(s)\,ds
\nonumber \\
&=\int_0^\infty e^{-rt}{}_tp_x\,dt+\int_0^{a^{\star}} e^{-rt}(1+r({a^{\star}}-t))(1-{}_tp_x)\,dt 
\nonumber \\
&={a^{\star}}+\int_{a^{\star}}^\infty e^{-rt}{}_tp_x\,dt-\int_0^{a^{\star}} e^{-rt}r({a^{\star}}-t){}_tp_x\,dt 
\label{survivalversion2}
\end{align}
Or in other words,
\begin{equation}
\int_{a^{\star}}^\infty e^{-rt}{}_tp_x\,dt=\int_0^{a^{\star}} e^{-rt}r({a^{\star}}-t){}_tp_x\,dt.
\label{twoaccountversion2}
\end{equation}
This formula appeared earlier as \eqref{twoaccountversion} and, as noted at the time, can be interpreted via making payments first from a phase-one and then from a phase-two account. 

Define functions $F(\alpha)=\int_{\alpha}^\infty e^{-rt}{}_tp_x\,dt$ and $G(\alpha)=\int_0^{\alpha} e^{-rt}r(\alpha-t){}_tp_x\,dt$ of a real variable $\alpha$. So by \eqref{twoaccountversion2} we have that $a^{\star}$ is the value of $\alpha$ making $F(a^{\star})=G({a^{\star}})$. Over the next few pages of this section we plan to examine the dependence of these two expressions or terms on some of the underlying parameters. In terms of notation, we will express these functions as, for example, $F(\alpha;r)$ or $G(\alpha;x)$, but otherwise we will not explicitly show the parameters. 

\begin{theorem} Consider a CRIA with no loading, and let $r>0$. 
\label{CRIAproperties}
\begin{enumerate}
\item There is a unique ${a^{\star}}(x)>0$ at which the CRIA is viable (ie $F({a^{\star}})=G({a^{\star}})$);
\item $r{a^{\star}}(x)<1$
\item ${a^{\star}}(x)$ is $\downarrow$ in $r$ with $\lim_{r\downarrow 0} {a^{\star}}(x)=\infty$, $\lim_{r\to\infty} r{a^{\star}}(x)=1$
\item Assume an increasing hazard rate. Then ${a^{\star}}(x)$ is $\downarrow$ in $x$. If we also assume that $\lim_{x\to\infty}\lambda_x=\infty$ then $\lim_{x\to\infty}{a^{\star}}(x)=0$. 
\end{enumerate}
\end{theorem}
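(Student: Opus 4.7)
My strategy is to work with the functions $F(\alpha)=\int_\alpha^\infty e^{-rt}{}_tp_x\,dt$ and $G(\alpha)=\int_0^\alpha e^{-rt}r(\alpha-t){}_tp_x\,dt$ introduced in the excerpt, so that $a^\star$ is the unique root of $H(\alpha):=G(\alpha)-F(\alpha)$. For (a), $F$ is strictly decreasing with $F(0)=a(x,r)>0$ and $F(\infty)=0$, while $G$ is strictly increasing with $G(0)=0$ and $G(\alpha)\sim r\alpha\, a(x,r)\to\infty$; continuity and strict monotonicity of $H$ give a unique positive root. For (b), the cleanest route uses the equivalent identity $a^\star=a+\int_0^{a^\star}(a^\star-t)e^{-rt}f(t\mid x)\,dt$ (obtained from \eqref{astar} by swapping the order of integration); bounding the integrand by $a^\star e^{-rt}f(t\mid x)$ and using $\int_0^\infty e^{-rt}f(t\mid x)\,dt=A(x,r)=1-ra$ yields $a^\star\le a+a^\star(1-ra)$, i.e.\ $ra^\star\le 1$, strictly because the bound $a^\star-t<a^\star$ holds on a set of positive measure.

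For (c) I plan to apply the implicit function theorem to $K(\alpha;r):=\alpha-a(x,r)-\int_0^\alpha(\alpha-t)e^{-rt}f(t\mid x)\,dt$; direct computation gives $K_\alpha=1-\int_0^\alpha e^{-rt}f(t\mid x)\,dt>0$ and $K_r=-\partial a/\partial r+\int_0^\alpha t(\alpha-t)e^{-rt}f(t\mid x)\,dt>0$, so $\partial a^\star/\partial r<0$. For the limits, as $r\downarrow 0$ the explicit factor of $r$ in $G$ forces $G(a^\star)\to 0$, whereas $F$ at any bounded $\alpha$ stays bounded below by a positive tail of ${}_tp_x$, so $a^\star$ cannot stay bounded; for $r\to\infty$ combine $ra\le ra^\star\le 1$ with $ra=1-A(x,r)\to 1$ (since $A(x,r)=E[e^{-rT}]\to 0$).

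For (d), the central observation is that under an increasing hazard rate, $G/F$ is strictly increasing in $x$ at fixed $\alpha$; combined with $G/F$ being strictly increasing in $\alpha$ and equal to $1$ at $a^\star$, this forces $a^\star$ to decrease in $x$. Using $\partial\ln({}_tp_x)/\partial x=-(\lambda_{x+t}-\lambda_x)$, I express $\partial\ln F/\partial x$ and $\partial\ln G/\partial x$ as negative weighted averages of $(\lambda_{x+t}-\lambda_x)$ with weights supported on $[\alpha,\infty)$ and $[0,\alpha]$ respectively; increasing hazard makes the average over $[\alpha,\infty)$ at least $\lambda_{x+\alpha}-\lambda_x$ and the average over $[0,\alpha]$ at most that same value, so $\partial\ln G/\partial x\ge -(\lambda_{x+\alpha}-\lambda_x)\ge \partial\ln F/\partial x$, i.e.\ $\partial\ln(G/F)/\partial x\ge 0$. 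For $\lim_{x\to\infty}a^\star=0$ under $\lambda_x\to\infty$, I argue by contradiction: if $a^\star\ge L>0$ along some sequence $x\to\infty$, then $F(a^\star)\le {}_Lp_x/r$ while restricting the $G$-integral to $[0,L/2]$ gives $G(a^\star)\ge (rL^2/4)e^{-rL/2}\,{}_{L/2}p_x$; equating forces ${}_{L/2}p_{x+L/2}={}_Lp_x/{}_{L/2}p_x$ to be bounded below by a positive constant, contradicting $\lambda_{x+L/2}\to\infty$.

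I expect this last step to be the main obstacle, since both $F$ and $G$ collapse to $0$ when $\lambda_x\to\infty$ and it is not a priori obvious which side dominates; the trick of splitting the integration region at $L/2$ is what makes the two collapse rates comparable and turns the putative equality into an impossible lower bound on the conditional survival factor ${}_{L/2}p_{x+L/2}$.
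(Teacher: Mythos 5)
Your proof is correct, and its skeleton --- defining $F$ and $G$ and locating $a^\star$ as their unique crossing --- is the same as the paper's; but several individual steps take a genuinely different route, and in two places yours is arguably cleaner. For (b), the paper bounds $F(a^\star)$ and $G(a^\star)$ separately using the monotonicity of ${}_tp_x$ and reads off $ra^\star<1$ from the resulting chain of inequalities; you instead bound the refund integral by $a^\star\int_0^\infty e^{-rt}f(t)\,dt=a^\star A(x,r)$ and invoke the identity $A+ra=1$, which is shorter and makes the role of the embedded insurance explicit. For the monotonicity in $r$ in (c), the paper differentiates $F$ and $G$ directly and must restrict to $\alpha<\frac1r$ because $\partial G/\partial r$ carries the factor $[1-rt]$, whose sign is only controlled there (item (b) of Lemma \ref{CRIApartials}); your implicit-function computation on the density form $K(\alpha;r)=\alpha-a-\int_0^\alpha(\alpha-t)e^{-rt}f(t)\,dt$ sidesteps that restriction entirely, since every $r$-derivative appearing in $K_r$ and $K_\alpha$ has a definite sign. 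The limits in (c) and all of (d) are essentially the paper's arguments in different clothing: at the crossing point $F=G$ your logarithmic comparison of $G/F$ is algebraically the same as the paper's computation of $\partial(F-G)/\partial x$ (both reduce to comparing weighted averages of $\lambda_{x+t}$ over $[0,a^\star]$ and $[a^\star,\infty)$ against $\lambda_{x+a^\star}$), and your contradiction argument splitting the $G$-integral at $L/2$ plays the same role as the paper's two-sided exponential bounds at a fixed $\epsilon$. One small caveat: as written your inequalities in (d) are non-strict, which only yields that $a^\star$ is non-increasing in $x$; strict decrease requires the hazard rate to be strictly increasing on the relevant range (true for Gompertz, and what the paper implicitly uses to get the strict inequality in Lemma \ref{CRIApartials}(c)).
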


\begin{proof} Note that $F(\alpha)$ is a decreasing function of $\alpha$, starting above 0 when $\alpha=0$ and $\to 0$ when $\alpha\to\infty$. The derivative $\frac{\partial G(\alpha)}{\partial \alpha}=r\int_0^\alpha e^{-rt}{}_tp_x\,dt$ is increasing in $\alpha$, so $G$ is a convex increasing function, starting at 0. Therefore $F$ and $G$ will cross at a unique value of $\alpha$, which therefore defines $a^{\star}$. This shows item (a) in the above theorem. 

\vspace{0.1in}

To prove (b), we use that ${}_tp_x$ is $\downarrow$ in $t$. Therefore at $\alpha={a^{\star}}(x)$ we have that: 
\begin{multline*}
{}_{a^{\star}}p_x \frac{e^{-r{a^{\star}}}}{r} ={}_{a^{\star}}p_x\int_{a^{\star}}^\infty e^{-rt}\,dt>F({a^{\star}})\\
=G({a^{\star}})>{}_{a^{\star}}p_x\int_0^{a^{\star}}e^{-rt}r({a^{\star}}-t)\,dt={}_{a^{\star}}p_x [{a^{\star}} - \frac{1-e^{-r{a^{\star}}}}{r}].
\end{multline*}
This immediately implies that ${a^{\star}}r<1$. Turning to item (c), if $r\downarrow 0$, then $\frac{\partial G}{\partial \alpha}\to 0$ for each $\alpha$, so $G\to 0$ pointwise. That is not true of $F$, so in fact ${a^{\star}}(x)\to\infty$. For $r\to\infty$ we have $ra_x<r{a^{\star}}(x)<1$ by (b). Because ${}_tp_x\to 0$ as $t\to\infty$ it is simple to show that $ra_x\to 1$. Therefore $r{a^{\star}}\to 1$. To show that ${a^{\star}}(x)$ decreases with $r$, observe that $\alpha={a^{\star}}(x)$ is the point where the decreasing function $F(\alpha;r)-G(\alpha;r)$ reaches 0. By (b) of Lemma \ref{CRIApartials} below, increasing $r$ will lower the path $F(\alpha;r)-G(\alpha;r)$ for $\alpha\in[0,\frac1r]$. This interval contains ${a^{\star}}(x)$ by (b), so this forces the zero of $F-G$ to the left. This shows (c). Now assume an increasing hazard rate. The first statement in part (d) follows as above, this time using (c) of Lemma \ref{CRIApartials}. 
Finally, suppose that the hazard rate increases without bound. Fix $\epsilon>0$. Let $\lambda=\lambda_{x+\epsilon}$. Then ${}_tp_x>e^{-\lambda t}$ for $t<\epsilon$ and ${}_tp_x<{}_\epsilon p_xe^{-\lambda(t-\epsilon)}$ for $t>\epsilon$. Therefore 
$$
F(\epsilon)<\int_\epsilon^\infty {}_\epsilon p_x e^{-rt}e^{-\lambda(t-\epsilon)}\,dt=\frac{e^{-r\epsilon}}{r+\lambda}{}_\epsilon p_x=o(\frac{1}{r+\lambda})
$$
since ${}_\epsilon p_x\to 0$ when $x\to\infty$. Likewise 
$$
G(\epsilon)>\int_0^\epsilon e^{-rt}r(\epsilon-t)e^{-\lambda t}\,dt = \frac{r}{r+\lambda}\Big[\epsilon-\frac{1}{r+\lambda}(1-e^{-(r+\lambda)\epsilon})\Big]\sim \frac{r\epsilon}{r+\lambda}
$$
when $x\to\infty$, since also $\lambda\to\infty$. In particular, $F(\epsilon)<G(\epsilon)$ for $x$ sufficiently large, which implies that ${a^{\star}}(x)<\epsilon$. Since $\epsilon$ was arbitrary, in fact ${a^{\star}}(x)\to 0$. 
\end{proof}

The above argument relied on the following result, which we now prove.
\begin{lemma} Consider $F$ and $G$ as above. Then
\label{CRIApartials}
\begin{enumerate}
\item $\frac{\partial F(\alpha)}{\partial \alpha}<0<\frac{\partial G(\alpha)}{\partial \alpha}$.
\item $\frac{\partial F(\alpha;r)}{\partial r}<0<\frac{\partial G(\alpha;r)}{\partial r}$ for $0<\alpha<\frac1r$.
\item Assume an increasing hazard rate. Then $\frac{\partial F(\alpha;x)}{\partial x}<\frac{\partial G(\alpha;x)}{\partial x}$ at $\alpha={a^{\star}}(x)$.
\end{enumerate}
\end{lemma}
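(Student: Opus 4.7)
The three parts require rather different techniques, so I would handle them separately, saving (c) for last as the main challenge.

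For part (a), I would differentiate directly by Leibniz's rule. The expression $F(\alpha)=\int_\alpha^\infty e^{-rt}{}_tp_x\,dt$ gives immediately $\partial F/\partial\alpha=-e^{-r\alpha}{}_\alpha p_x<0$. For $G(\alpha)=\int_0^\alpha e^{-rt}r(\alpha-t){}_tp_x\,dt$, the boundary term at $t=\alpha$ vanishes because the integrand contains the factor $(\alpha-t)$, so $\partial G/\partial\alpha=r\int_0^\alpha e^{-rt}{}_tp_x\,dt>0$.

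For part (b), differentiating under the integral sign gives $\partial F/\partial r=-\int_\alpha^\infty t e^{-rt}{}_tp_x\,dt<0$. For $G$, a short computation yields
\begin{equation*}
\frac{\partial G(\alpha;r)}{\partial r}=\int_0^\alpha e^{-rt}(\alpha-t)(1-rt){}_tp_x\,dt.
\end{equation*}
When $0<\alpha<1/r$, both $\alpha-t>0$ and $1-rt>0$ hold throughout the interval of integration, so the integrand is strictly positive and the integral is positive. This is the same $1/r$ threshold that already appears in part (b) of Theorem \ref{CRIAproperties}, so the hypothesis aligns naturally with how (b) is used there.

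For part (c), the main ingredient is the differentiation formula for ${}_tp_x$ with respect to $x$. Writing ${}_tp_x=\exp(-\int_x^{x+t}\lambda_s\,ds)$ and setting $\mu(t)=\lambda_{x+t}-\lambda_x$, I obtain $\partial_x {}_tp_x=-\mu(t){}_tp_x$. Under the increasing hazard rate assumption, $\mu\ge 0$ and $\mu$ is itself increasing in $t$, with $\mu(0)=0$. Differentiating yields
\begin{equation*}
\frac{\partial G}{\partial x}-\frac{\partial F}{\partial x}=\int_\alpha^\infty e^{-rt}\mu(t){}_tp_x\,dt-\int_0^\alpha e^{-rt}r(\alpha-t)\mu(t){}_tp_x\,dt.
\end{equation*}

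The key observation — and the crux of the proof — is that because $\mu$ is monotone, we can compare it on the two intervals to $\mu(\alpha)$: for $t\in[0,\alpha]$ we have $\mu(t)\le\mu(\alpha)$, and for $t\ge\alpha$ we have $\mu(t)\ge\mu(\alpha)$. Applying these bounds separately gives
\begin{equation*}
\frac{\partial G}{\partial x}-\frac{\partial F}{\partial x}\ge\mu(\alpha)\Bigl[\int_\alpha^\infty e^{-rt}{}_tp_x\,dt-\int_0^\alpha e^{-rt}r(\alpha-t){}_tp_x\,dt\Bigr]=\mu(\alpha)\bigl[F(\alpha)-G(\alpha)\bigr].
\end{equation*}
Evaluating at $\alpha={a^{\star}}(x)$ makes the bracket vanish by the defining equation \eqref{twoaccountversion2}, so we get at least a weak inequality. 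To upgrade to strict inequality one of the two monotonicity estimates must be strict on a set of positive measure — which follows as soon as $\mu$ is not identically constant above $\alpha$, a consequence of the (strictly) increasing hazard rate. The main obstacle is spotting that the defining identity $F=G$ at ${a^{\star}}$ is exactly what converts the crude bound into a sharp one; without that identity the two sides are not comparable in general.
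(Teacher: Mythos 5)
Your proposal is correct and follows essentially the same route as the paper: parts (a) and (b) are identical computations, and in part (c) your comparison of $\mu(t)=\lambda_{x+t}-\lambda_x$ against $\mu(\alpha)$ on the two intervals is exactly the paper's comparison of $\lambda_{x+t}$ against $\lambda_{x+a^{\star}}$, with the identity $F(a^{\star})=G(a^{\star})$ used in the same way to kill the remaining term. Your remark about when the inequality is strict is a slightly more careful treatment of a point the paper passes over.
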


\begin{proof}
We've already seen the argument for part (a), at the beginning of the proof of Theorem \ref{CRIAproperties}. For part (b), $\frac{\partial F(\alpha;r)}{\partial r}=-\int_{\alpha}^\infty te^{-rt}{}_tp_x\,dt <0$ and $\frac{\partial G(\alpha;r)}{\partial r}=\int_0^{\alpha}e^{-rt}(\alpha-t){}_tp_x[1-rt]\,dt$. Since $r\alpha<1$, the integrand is $>0$, so $\frac{\partial G(\alpha;r)}{\partial r}>0$.

For part (c) we have 
\begin{align*}
\frac{\partial(F-G)}{\partial x} &=\int_{\alpha}^\infty e^{-rt}{}_tp_x[\lambda_x-\lambda_{x+t}]\,dt-\int_0^{a^{\star}} e^{-rt}r(\alpha-t){}_tp_x[\lambda_x-\lambda_{x+t}]\,dt \\
& =\lambda_xF-\lambda_xG+\int_0^{\alpha} e^{-rt}r(\alpha-t){}_tp_x\lambda_{x+t}\,dt-\int_{\alpha}^\infty e^{-rt}{}_tp_x\lambda_{x+t}\,dt\\
& =\int_0^{a^{\star}} e^{-rt}r({a^{\star}}-t){}_tp_x\lambda_{x+t}\,dt-\int_{a^{\star}}^\infty e^{-rt}{}_tp_x\lambda_{x+t}\,dt
\end{align*}
at $\alpha=a^{\star}$ since $F(a^{\star})=G(a^{\star})$. 
Assume an increasing hazard rate. Then 
$$
\int_0^{a^{\star}} e^{-rt}r({a^{\star}}-t){}_tp_x\lambda_{x+t}\,dt<\int_0^{a^{\star}} e^{-rt}r({a^{\star}}-t){}_tp_x\lambda_{x+{a^{\star}}}\,dt=\lambda_{x+{a^{\star}}}G(a^{\star})
$$
while 
$$
\int_{a^{\star}}^\infty e^{-rt}{}_tp_x\lambda_{x+t}\,dt>\int_{a^{\star}}^\infty e^{-rt}{}_tp_x\lambda_{x+{a^{\star}}}\,dt=\lambda_{x+{a^{\star}}}F(a^{\star}).
$$
Therefore the difference is $<\lambda_{x+{a^{\star}}}(G(a^{\star})-F(a^{\star}))=0$. 
\end{proof}

Note that the statement of item (b) above will be strengthened below (see Lemma \ref{strengthenedpartials}), but we prefer to keep the short proof given above, for the case considered here. Note that with $r=0$ the CRIA is clearly non-viable: We have $G=0$ while $F>0$. For $r<0$ things are even worse -- $F>0$ while $G<0$.  Note also that (c) of the Theorem need not hold, without the assumption of an increasing hazard rate. For example, if the hazard rate decreases, the proof of the Lemma implies that $\frac{\partial [F(\alpha;x)-G(\alpha; x)]}{\partial x}>0$ at $\alpha=a^{\star}$, and therefore $\frac{d{a^{\star}}}{dx}>0$. Though ${a^{\star}}$ declines with $x$, it cannot decline too fast, as the next Lemma shows. Let $y=x+{a^{\star}}(x)$ be the age at which the cash refund expires. Then:
\begin{lemma} 
For a fairly priced CRIA, 
$y$ increases with $x$. 
\label{monotonicityiny}
\end{lemma}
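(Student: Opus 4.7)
The plan is to reparameterize the CRIA fixed-point relation so that $x$ and $y=x+{a^{\star}}(x)$ appear \emph{only} as endpoints of integration, after which the implicit function theorem delivers the result by inspection. Passing to the absolute-age variable $u=x+t$ and writing ${}_tp_x=e^{\mu(x)-\mu(x+t)}$ with $\mu(u)=\int_0^u \lambda_s\,ds$, equation \eqref{twoaccountversion2} becomes, after division by the positive factor $e^{rx+\mu(x)}$,
\[
\int_y^\infty g(u)\,du\;=\;r\int_x^y (y-u)\,g(u)\,du,
\]
where $g(u):=e^{-ru-\mu(u)}$ depends only on $u$. I define $\Phi(x,y)$ to be the difference between the left- and right-hand sides, so that $\Phi(x,y(x))\equiv 0$.

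With $x$ and $y$ now entering purely as limits of integration, the signs of the partials can be read off directly. The first integral contributes nothing to $\Phi_x$, and differentiating the lower limit of the second gives $\Phi_x=r(y-x)g(x)>0$ (using $y>x$). A Leibniz calculation yields $\Phi_y=-g(y)-r\int_x^y g(u)\,du<0$. By the implicit function theorem,
\[
\frac{dy}{dx}=-\frac{\Phi_x}{\Phi_y}=\frac{r(y-x)\,g(x)}{g(y)+r\int_x^y g(u)\,du}>0,
\]
which is the desired monotonicity.

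The one mild obstacle is spotting the correct change of variables. Working directly from the formulation $H(x,\alpha)=F(\alpha;x)-G(\alpha;x)$ of Lemma \ref{CRIApartials} reduces the claim to the awkward inequality $H_x\ge H_\alpha$ at $\alpha={a^{\star}}$, which is not transparent because both sides depend on $x$ through the factors $\lambda_{x+t}$ inside the integrands. Passing to $(x,y)$ coordinates decouples the dependence on $x$ from the interior of the integrals, and the sign analysis becomes essentially trivial. It is worth noting that, in contrast to Theorem \ref{CRIAproperties}(d), no monotone-hazard assumption is required here; strict positivity of $r$ and of ${a^{\star}}(x)$, both guaranteed by Theorem \ref{CRIAproperties}, suffice.
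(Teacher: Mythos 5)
Your proof is correct, and it follows the same overall strategy as the paper's: rewrite the viability equation \eqref{twoaccountversion2} in the coordinates $(x,y)$ with $y=x+{a^{\star}}(x)$, and apply implicit differentiation, using that the $y$-partial of the defining function has a definite sign while the $x$-partial has the opposite sign on the zero set. The difference is in execution, and yours is cleaner. The paper keeps the relative-time variable $t$ inside the integrands (its $\tilde F(y;x)$, $\tilde G(y;x)$), so that $x$ enters both through the limits and through ${}_tp_x$ and $\lambda_{x+t}$; computing $\partial\tilde F/\partial x$ and $\partial\tilde G/\partial x$ then requires integration by parts and the cancellation $(r+\lambda_x)(\tilde F-\tilde G)=0$ on the solution curve to arrive at $\partial_x(\tilde F-\tilde G)=r(y-x)>0$. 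Your substitution $u=x+t$, together with factoring out $e^{rx+\mu(x)}$, pushes all of the $x$- and $y$-dependence into the limits of integration of the fixed kernel $g(u)=e^{-ru-\mu(u)}$, so both partials are read off from Leibniz's rule with no integration by parts; indeed your $\Phi_x=r(y-x)g(x)$ is exactly the paper's $r(y-x)$ rescaled by the positive factor $g(x)=e^{-rx-\mu(x)}$ you divided out, so the two computations agree. Your closing observations are also accurate: the paper's proof of this lemma likewise uses no increasing-hazard assumption, only $r>0$ and ${a^{\star}}(x)>0$ from Theorem \ref{CRIAproperties}.
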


\begin{proof}
Re-express $F$ and $G$ as 
$$
\tilde F(y; x)=\int_{y-x}^\infty e^{-rt}{}_tp_x\,dt, \qquad 
\tilde G(y; x)=r\int_0^{y-x}e^{-rt}(y-x-t){}_tp_x\,dt.
$$
As before, we let
$y=y(x)$ make $\tilde F(y;x)-\tilde G(y;x)=0$. By implicit differentiation, 
$$
\frac{dy}{dx}=-\frac{\frac{\partial [\tilde F-\tilde G]}{\partial x}}{\frac{\partial [\tilde F-\tilde G]}{\partial y}}
$$
at $y=y(x)$. But $\frac{\partial \tilde F}{\partial y}<0$ and $\frac{\partial \tilde G}{\partial y}>0$. And
using integration by parts,
$$
\frac{\partial \tilde F}{\partial x}=e^{-r(y-x)}{}_{y-x}p_x+\int_{y-x}^\infty e^{-rt}{}_tp_x[\lambda_x-\lambda_{x+t}]\,dt=(r+\lambda_x)\tilde F
$$
and 
\begin{align*}
\frac{\partial \tilde G}{\partial x}&=-r\int_0^{y-x} e^{-rt}{}_tp_x\,dt+r\int_0^{y-x}e^{-rt}(y-x-t){}_tp_x[\lambda_x-\lambda_{x+t}]\,dt\\
&=-r\int_0^{y-x} e^{-rt}{}_tp_x\,dt+\lambda_x\tilde G-r(y-x)-r\int_0^{y-x}e^{-rt}[-r(y-x-t)-1]{}_tp_x\,dt\\
&=(r+\lambda_x)\tilde G-r(y-x).
\end{align*}
So the desired inequality holds.
\end{proof}

\subsection{Theorems and proofs: CRIA with loading} 
\label{sec:loadedCRIA}
Next consider viability in the presence of loading. In actuarial terminology, loading is the percentage increase applied to the basic hedging cost to obtain the price charged consumers, with the difference going to fund expenses, capital reserves, or insurer profits. Recall that we assigned the symbol $\pi$ for the proportional loading, and used that in the body of the paper and for the numerical examples in Table \ref{viabler}, with $\pi=5\%,15\%, 25\%$.


\begin{theorem} Consider a CRIA with loading $\pi$. 
\begin{enumerate}
\item If $\frac{\pi}{1+\pi}<ra_x$ then there is a unique ${\widehat{a}^{\star}}(x)$ at which the CRIA is viable. Moreover, ${\widehat{a}^{\star}}(x)$ increases with $\pi$ and decreases with $r$. 
\item If $\frac{\pi}{1+\pi}\ge ra_x$ there is no ${\widehat{a}^{\star}}$ making the CRIA viable. 
\item For each $\pi>0$ and each $x$, there is an $r_{\pi,x}$ such that the CRIA is viable when $r>r_{\pi, x}$ but not when $r\le r_{\pi,x}$. For fixed $x$, $r_{\pi,x}\to 0$ when $\pi\downarrow 0$, and $r_{\pi,x}\to \infty$ when $\pi\uparrow \infty$. If $r\downarrow r_{\pi,x}$ then ${\widehat{a}^{\star}}(x)\to\infty$. If $r\to \infty$ then $r{\widehat{a}^{\star}}(x)\to 1+\pi$. 
\item Assume a hazard rate that increases without bound. For each $\pi>0$ and $r>0$ there is an $x_{\pi,r}$ such that The CRIA is viable when $x<x_{\pi,r}$ but not when $x\ge  x_{\pi,r}$. If $x\uparrow x_{\pi,r}$ then ${\widehat{a}^{\star}}(x)\to\infty$.
\end{enumerate}
\end{theorem}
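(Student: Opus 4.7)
The plan is to recast the loaded pricing equation as a single scalar equation in a real variable $\alpha$, and then to follow the template used for Theorem \ref{CRIAproperties}. Running the same algebra that produced \eqref{twoaccountversion2} on the loaded identity (after dividing through by $1+\pi$) gives
\[
\psi(\alpha)\;:=\; F(\alpha)-G(\alpha)+\frac{\pi}{1+\pi}\alpha \;=\;0,
\]
so $\widehat{a}^{\star}(x)$ is characterized as a positive root of $\psi$. The unloaded equation $F=G$ is the $\pi=0$ case, and what follows is a perturbation analysis in $\pi$.

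For parts (a) and (b) I would study the shape of $\psi$. From $F'(\alpha)=-e^{-r\alpha}{}_\alpha p_x$ and $G'(\alpha)=r\int_0^\alpha e^{-rt}{}_tp_x\,dt$ one reads off $\psi(0)=a_x>0$, $\psi'(0)=-\tfrac{1}{1+\pi}<0$, and $\psi'(\infty)=\tfrac{\pi}{1+\pi}-ra_x$, while the inequality $H''=G''-F''<0$ (established inside the proof of Theorem \ref{CRIAproperties}) forces $\psi$ to be convex. A case split on the sign of $\psi'(\infty)$ then settles existence and uniqueness. If $\tfrac{\pi}{1+\pi}<ra_x$ then $\psi'<0$ throughout, so $\psi$ decreases strictly from $a_x$ to $-\infty$ and has a unique zero. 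If $\tfrac{\pi}{1+\pi}=ra_x$ then $\psi$ is still strictly decreasing, but an explicit asymptotic evaluation of $G$ (using $\alpha F(\alpha)\to 0$, which holds for exponential-tailed mortality such as Gompertz) yields $\psi(\infty)=r\int_0^\infty te^{-rt}{}_tp_x\,dt>0$, so no zero. If $\tfrac{\pi}{1+\pi}>ra_x$ then, since $\psi$ is pointwise strictly increasing in $\pi$, it lies pointwise above the boundary case and is therefore strictly positive; again no zero. Monotonicity of $\widehat{a}^\star$ in $\pi$ then follows by implicit differentiation using $\partial\psi/\partial\pi>0$ and $\partial\psi/\partial\alpha<0$ at the root, and decrease in $r$ follows from $\partial F/\partial r<0$ combined with Lemma \ref{strengthenedpartials}.

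For part (c) the governing object is $\rho(r):=ra_x(r)$. Integration by parts turns this into $\rho(r)=1-E[e^{-rT_x}]$, and since $E[e^{-rT_x}]$ is strictly decreasing in $r$ from $1$ to $0$, $\rho$ is strictly increasing in $r$ from $0$ to $1$. Hence there is a unique $r_{\pi,x}$ solving $\rho(r_{\pi,x})=\tfrac{\pi}{1+\pi}$, and parts (a)--(b) translate this into the stated dichotomy. The limits $r_{\pi,x}\to 0,\infty$ as $\pi\to 0,\infty$ follow from $\tfrac{\pi}{1+\pi}\to 0,1$; the divergence $\widehat{a}^{\star}(x)\to\infty$ as $r\downarrow r_{\pi,x}$ expresses that the root of $\psi$ escapes to $\infty$ as the regime crosses to the boundary; and for $r\to\infty$, a direct expansion of the loaded fixed-point equation gives $a_x\sim 1/r$ and shows the quadratic integral term is $O(1/r^2)$, so $\widehat{a}^\star/(1+\pi)\sim 1/r$, i.e.\ $r\widehat{a}^\star\to 1+\pi$. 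Part (d) uses the same template with $x$ in place of $r$: under an unbounded increasing hazard rate, ${}_tp_x\downarrow 0$ for each fixed $t>0$, so $\rho(x):=ra_x$ is strictly decreasing in $x$ with $\rho(x)\downarrow 0$, producing the threshold $x_{\pi,r}$ and the divergence $\widehat{a}^\star(x)\to\infty$ as $x\uparrow x_{\pi,r}$.

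The main obstacle I anticipate is the boundary case $\tfrac{\pi}{1+\pi}=ra_x$ in part (b): $\psi$ is monotone and strictly decreasing there, so the nonexistence of a zero is not a convexity triviality but instead demands the explicit asymptotic identity $\lim_{\alpha\to\infty}[G(\alpha)-ra_x\alpha]=-r\int_0^\infty te^{-rt}{}_tp_x\,dt$, for which one must first verify $\alpha F(\alpha)\to 0$. Once this sharp positive boundary limit is in hand, strict monotonicity in $\pi$ disposes of the super-threshold regime for free, and the remaining claims of parts (c) and (d) fall out of implicit-function and direct asymptotic arguments.
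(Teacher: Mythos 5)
Your proposal is correct and follows essentially the same route as the paper: the same reduction of viability to the scalar equation $\tfrac{\pi}{1+\pi}\alpha+F(\alpha)-G(\alpha)=0$, the same use of the convexity of $F-G$ and of its asymptotic slope $-ra_x$ for existence and uniqueness, the same appeal to Lemma \ref{strengthenedpartials} for monotonicity in $r$, and the same function $\rho(r)=ra_x=1-E[e^{-rT_x}]$ (via integration by parts) to produce the thresholds in parts (c) and (d). The only place you diverge is part (b): you treat the boundary case $\tfrac{\pi}{1+\pi}=ra_x$ by computing $\lim_{\alpha\to\infty}\psi(\alpha)=r\int_0^\infty te^{-rt}{}_tp_x\,dt>0$ (which is valid, and needs no Gompertz-type tail assumption since $F(\alpha)\le e^{-r\alpha}/r$ already gives $\alpha F(\alpha)\to 0$ for any mortality law with $r>0$), whereas the paper disposes of all of $\delta\ge ra_x$ in one line via $\delta\alpha+F(\alpha)>\delta\alpha\ge ra_x\alpha>G(\alpha)$; so the ``main obstacle'' you flag is real in your framing but is avoidable by that direct inequality.
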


\begin{proof} Loading changes the LHS of \eqref{survivalversion2} into $\frac{1}{1+\pi}\widehat{a}^{\star}$, and it therefore adds $\frac{\pi}{1+\pi}\widehat{a}^{\star}$ to the LHS of \eqref{twoaccountversion2}. Let $\delta=\frac{\pi}{1+\pi}$. So the CRIA is viable at price ${\widehat{a}^{\star}}$ if $\delta {\widehat{a}^{\star}}+F({\widehat{a}^{\star}})=G({\widehat{a}^{\star}})$. Then $G(\alpha)\le r\alpha\int_0^{\alpha}e^{-rt}{}_tp_x\,dt<r\alpha a_x$ and $\delta \alpha+F(\alpha)>\alpha\delta$. In particular, if $\delta\ge ra_x$ then the curves cannot cross, so no ${\widehat{a}^{\star}}$ can be viable. This shows (b). We know that $F(\alpha)-G(\alpha)$ starts positive at $\alpha=0$, and decreases forever. Its asymptotic slope is $-ra_x$. In particular, if $\delta<ra_x$ then it must cross the curve $-\delta \alpha$, so there is a viable ${\widehat{a}^{\star}}$. The argument also implies that if we let $\delta\uparrow ra_x$, any ${\widehat{a}^{\star}}$ that achieves viability will $\to\infty$. Because $-\delta \alpha$ decreases with $\delta$, we also conclude that the viable ${\widehat{a}^{\star}}$ increases with $\delta$, and hence with $\pi$.  

To see uniqueness of ${\widehat{a}^{\star}}$, observe $F''(\alpha)-G''(\alpha)>0$ so $F-G$ is convex. So if $F(\alpha)-G(\alpha)$ crosses the line $-\delta \alpha$ twice, then after the second crossing it must stay above that line. This is not compatible with the asymptotic slopes, so in fact there is a unique crossing.  To show that ${a^{\star}}(x)$ decreases with $r$, we will use Lemma \ref{strengthenedpartials} below. 

It shows that raising $r$ will lower the curve $F(\alpha;r)-G(\alpha;r)$. This pushes down the value of ${\widehat{a}^{\star}}$ at which this curve crosses the line $-\delta \alpha$, which completes the proof of (a). To address (c) we need to understand how the function $\rho(r)=ra_x$ varies with $r$. We have $\rho(0)=0$, and integration by parts shows that $\rho(r)=1-\int_0^\infty e^{-rt}{}_tp_x\,\lambda_{x+t}\,dt$. This increases with $r$,  converging to 1 as $r\to\infty$. The existence and properties of $r_{\pi,x}$ now follow from our earlier conclusions. To find the asymptotics as $r\to\infty$, observe that the viable ${\widehat{a}^{\star}}$ satisfies $r[F({\widehat{a}^{\star}})-G({\widehat{a}^{\star}})]=-\delta r{\widehat{a}^{\star}}$. Letting $B=r{\widehat{a}^{\star}}$, and changing variables in the integrals, this becomes that
$$
\int_B^\infty e^{-s}{}_{\frac{s}{r}}p_x\,ds-\int_0^B e^{-s}(B-s){}_{\frac{s}{r}}p_x\,ds=-\delta B.
$$
Sending $r\to\infty$ makes this asymptotically $\int_B^\infty e^{-s}\,ds-\int_0^B e^{-s}(B-s)\,ds=-\delta B$, which simplifies to $1-B=-\delta B$. So $B\to \frac{1}{1-\delta}=1+\pi$ which completes the proof of (c). Under the assumptions of (d), if we increase $x$ then $a_x$ decreases, and $a_x\to 0$ when $x\to\infty$. The conclusions of (d) now follow from our earlier statements. 
\end{proof}

The above argument used the following result, which we now prove. It is a strengthening of (b) of Lemma \ref{CRIApartials}.
\begin{lemma}
\label{strengthenedpartials}
Consider $F(\alpha;r)$ and $G(\alpha;r)$ as above. Then $\frac{\partial F}{\partial r}<0<\frac{\partial G}{\partial r}$ for every $\alpha>0$. 
\end{lemma}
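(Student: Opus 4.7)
For $F$ the claim is routine: differentiating under the integral gives $\partial F/\partial r = -\int_\alpha^\infty t e^{-rt}{}_tp_x\,dt$, which is strictly negative for every $\alpha > 0$ and $r > 0$. So all of the work is in $G$. The direct computation yields
$$\frac{\partial G}{\partial r} = \int_0^\alpha e^{-rt}(\alpha - t)(1-rt){}_tp_x\,dt,$$
and the integrand's sign changes at $t = 1/r$. This is exactly why the earlier Lemma \ref{CRIApartials}(b) had to restrict to $\alpha < 1/r$, where the factor $(1-rt)$ stays non-negative.

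My plan is to absorb the sign change through an integration by parts, using the identity $e^{-rt}(1-rt)\,dt = d(te^{-rt})$, which one spots by differentiating $te^{-rt}$. Rewriting
$$\frac{\partial G}{\partial r} = \int_0^\alpha (\alpha - t){}_tp_x \, d(te^{-rt})$$
and integrating by parts with $u = (\alpha - t){}_tp_x$ and $v = te^{-rt}$, the boundary term $(\alpha - t){}_tp_x \cdot te^{-rt}$ vanishes at both endpoints (because of the $t$ factor at $t=0$ and the $\alpha - t$ factor at $t=\alpha$). Using $d({}_tp_x) = -{}_tp_x\lambda_{x+t}\,dt$, what should remain is
$$\frac{\partial G}{\partial r} = \int_0^\alpha te^{-rt}{}_tp_x\,dt + \int_0^\alpha te^{-rt}(\alpha - t){}_tp_x\lambda_{x+t}\,dt,$$
a sum of two integrals with manifestly non-negative integrands, each strictly positive on a set of positive measure, yielding $\partial G/\partial r > 0$.

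The main obstacle is simply recognizing the antiderivative identity; once it is in hand, the sign-changing factor $(1-rt)$ is eliminated and the argument is a direct computation. No splitting of the interval at $1/r$ and no use of monotonicity of the survival function is required, which is precisely what enables the conclusion to extend from $\alpha < 1/r$ in Lemma \ref{CRIApartials}(b) to all $\alpha > 0$.
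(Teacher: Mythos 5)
Your proof is correct, and it takes a genuinely different route from the paper's. The paper handles the sign change of the factor $(1-rt)$ by a comparison argument: since ${}_tp_x$ is decreasing, it replaces ${}_tp_x$ by the constant $b={}_{1/r}p_x$ (which lowers the integrand on $t<1/r$ where $1-rt>0$ and also lowers it on $t>1/r$ where $1-rt<0$), reducing the problem to the explicit elementary integral $\int_0^\alpha b\,e^{-rt}(\alpha-t)(1-rt)\,dt=\frac{b}{r^2}\bigl(1-e^{-r\alpha}(1+\alpha r)\bigr)$, which is positive by $e^y\ge 1+y$. You instead recognize $e^{-rt}(1-rt)\,dt=d(te^{-rt})$ and integrate by parts; the boundary terms vanish as you say, and the identity
$$\frac{\partial G}{\partial r}=\int_0^\alpha te^{-rt}{}_tp_x\,dt+\int_0^\alpha te^{-rt}(\alpha-t){}_tp_x\lambda_{x+t}\,dt$$
exhibits $\partial G/\partial r$ as an exact sum of two nonnegative terms, the first strictly positive. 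Your computation checks out (with $u=(\alpha-t){}_tp_x$ one has $du=-{}_tp_x(1+(\alpha-t)\lambda_{x+t})\,dt$, giving precisely the two terms above). The one quibble is your closing remark that no monotonicity of the survival function is used: your argument needs $\lambda_{x+t}\ge 0$ to sign the second integral, which is exactly the statement that ${}_tp_x$ is non-increasing, so both proofs ultimately rest on the same hypothesis, merely packaged differently. What your version buys is an exact positive decomposition rather than a lower bound, which makes the strict positivity transparent and could be reused for sensitivity formulas; what the paper's version buys is brevity and a closed-form lower bound on $\partial G/\partial r$.
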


\begin{proof}
As in Lemma \ref{CRIApartials},
$\frac{\partial F(\alpha;r)}{\partial r}=-\int_{\alpha}^\infty te^{-rt}{}_tp_x\,dt <0$. Let $b={}_{\frac1r}p_x$. Since ${}_tp_x$ decreases with $t$,  
$$
\frac{\partial G(\alpha;r)}{\partial r}=\int_0^{\alpha}e^{-rt}(\alpha-t){}_tp_x[1-rt]\,dt
>\int_0^{\alpha}be^{-rt}(\alpha-t)[1-rt]\,dt.
$$
The antiderivative required is $-be^{-rt}\Big(\frac1{r^2}+[\frac1r-{\alpha}]t+t^2\Big)$, making the above expression $=\frac{b}{r^2}\Big(1-e^{-r\alpha}(1+\alpha r)\Big)$. 
This is $\ge 0$ since $e^{y}\ge 1+y$. 
\end{proof}

\subsection{Theorems and proofs: IRIA \& Loading} 
\label{sec:IRIA}
Define $H(\alpha)=\frac{1}{r} (1-e^{-r \alpha})+  \int_{\alpha}^{\infty} e^{-rs} \, (_sp_x) \, ds.$ Recalling \eqref{alpha}, an unloaded IRIA is viable at price $a^{\circ}$ if ${a^{\circ}}=H({a^{\circ}})$. More generally, in the presence of loading by $\pi$ , the relevant equation is that 
\begin{equation}
\frac{1}{1+\pi}\widehat{a}^{\circ}=H({\widehat{a}^{\circ}}).
\label{alpha2}
\end{equation}

\begin{theorem}
Consider an IRIA with loading $\pi\ge 0$. Let $r>0$.
\begin{enumerate}
\item There is a unique ${\widehat{a}^{\circ}}(x)$ at which the IRIA is viable. ${\widehat{a}^{\circ}}(x)$ increases with $\pi$.
\item $r{\widehat{a}^{\circ}}(x)<1+\pi$.
\item ${\widehat{a}^{\circ}}(x)$ is $\downarrow$ in $r$. It $\to\infty$ as $r\downarrow 0$, and $r{\widehat{a}^{\circ}}(x)\to 1+\pi$ as $r\to\infty$.
\item Assume an increasing hazard rate. Then ${\widehat{a}^{\circ}}(x)$ is $\downarrow$ in $x$. $\lim_{x\to\infty}{\widehat{a}^{\circ}}(x)>0$ if $\pi >0$. Whereas if the hazard rate is also unbounded and $\pi=0$, then $\lim_{x\to\infty}{a^{\circ}}(x)=0$.
\end{enumerate}
\end{theorem}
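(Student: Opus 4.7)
The plan is to reformulate the viability equation so that existence and uniqueness become transparent, and then lean on that structure for the comparative statics. Splitting $H$ at $\alpha$ in \eqref{alpha2}, the viability equation reads $A(\alpha) = R(\alpha)$, where $A(\alpha) = (1+\pi)\int_\alpha^\infty e^{-rs}{}_sp_x\,ds$ captures the mortality-sensitive tail and $R(\alpha) = \alpha - \frac{1+\pi}{r}(1-e^{-r\alpha})$ captures the loaded period-certain gap. Throughout, I would keep the interval split at the natural threshold $\alpha^\sharp = \ln(1+\pi)/r$ (with $\alpha^\sharp = 0$ when $\pi = 0$).

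For part (a), I would establish existence and uniqueness by this region split. Direct calculation gives $R(0)=0$ and $R'(\alpha) = 1-(1+\pi)e^{-r\alpha}$, so $R$ is strictly convex with $R'(\alpha^\sharp)=0$ and $R(\alpha)\le 0$ on $[0,\alpha^\sharp]$; meanwhile $A(\alpha)>0$ everywhere, which rules out any root in the first region. On $[\alpha^\sharp,\infty)$, $R$ strictly increases from a non-positive value to $+\infty$ while $A$ strictly decreases from $A(\alpha^\sharp)>0$ to $0$, so $A-R$ is strictly monotone with opposite signs at the endpoints and vanishes at a single point, which I take to be $\widehat{a}^\circ$. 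That $\widehat{a}^\circ$ increases with $\pi$ is then immediate: raising $\pi$ shifts $A$ up and $R$ down at every $\alpha$, pushing the crossing to the right. Part (b) follows at once from $H(\widehat{a}^\circ)<1/r$ combined with \eqref{alpha2}.

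For part (c), the transversality identity $R'(\widehat{a}^\circ) - A'(\widehat{a}^\circ) = 1-(1+\pi)H'(\widehat{a}^\circ) > 0$, which holds automatically in the good region, lets me invoke the implicit function theorem: since $H$ decreases pointwise in $r$, the fixed-point equation shifts in the direction that forces $d\widehat{a}^\circ/dr < 0$. For the $r\downarrow 0$ asymptotic, note that $R \equiv -\pi\alpha \le 0$ in the limit while $A(\alpha) = (1+\pi)\int_\alpha^\infty {}_sp_x\,ds > 0$, so the crossing cannot occur at any finite $\alpha$ and must escape to infinity. For $r\to\infty$, setting $B = r\widehat{a}^\circ$ and using the substitution $u = rs$ together with ${}_{u/r}p_x \to 1$ (dominated convergence) reduces the fixed-point equation asymptotically to $B = (1+\pi)[(1-e^{-B})+e^{-B}] = 1+\pi$.

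For part (d), $\partial H/\partial x = -\int_\alpha^\infty e^{-rs}{}_sp_x(\lambda_{x+s}-\lambda_x)\,ds \le 0$ under an increasing hazard rate, so the same implicit differentiation yields $d\widehat{a}^\circ/dx<0$. For the limits as $x\to\infty$ with unbounded hazard, ${}_sp_x\to 0$ for each fixed $s>0$, so $H(\alpha)\to (1-e^{-r\alpha})/r$ on compacts; the limiting equation in the scaled variable $B=r\widehat{a}^\circ$ becomes $B=(1+\pi)(1-e^{-B})$. When $\pi>0$ this concave curve exceeds the line $y=B$ near $0$ (initial slope $1+\pi>1$) and falls below it eventually (asymptote $1+\pi$), giving a unique root $B^*>0$ and hence $\widehat{a}^\circ \to B^*/r > 0$; when $\pi=0$ only $B=0$ solves it, forcing $a^\circ\to 0$. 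The main technical obstacle I anticipate is the uniqueness step in (a): a naive attempt to show $H(\alpha)-\alpha/(1+\pi)$ is globally monotone fails because $H$ is convex for small $\alpha$ and concave for large $\alpha$, so the region-split via $\alpha^\sharp$ is essential.
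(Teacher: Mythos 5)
Your proof is correct and follows essentially the same route as the paper: your $A$ and $R$ are just $(1+\pi)$ times the paper's $v(\alpha)=\int_\alpha^\infty e^{-rs}{}_sp_x\,ds$ and $u(\alpha)=\frac{\alpha}{1+\pi}-\frac{1-e^{-r\alpha}}{r}$, your explicit split at $\alpha^\sharp=\ln(1+\pi)/r$ is the paper's observation that $u$ first decreases and then increases, and the $r\to\infty$ and $x\to\infty$ scalings match. One small note: the claim $\lim_{x\to\infty}\widehat a^\circ(x)>0$ for $\pi>0$ is asserted under an increasing (possibly bounded) hazard, so you should derive it from your part (a) fact that no root lies in $[0,\alpha^\sharp]$ (an $x$-independent bound) rather than from the unbounded-hazard limiting equation.
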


\begin{proof}
$H(0)>0$ and $H'(\alpha)=e^{-r\alpha}(1-{}_{\alpha} p_x)<1$. So the viable ${a^{\circ}}$ exists and is unique when $\pi=0$.  
To see this in the case $\pi>0$, rewrite \eqref{alpha2} as
$$
\frac{1}{1+\pi}a^{\circ}-\frac{1-e^{-r{a^{\circ}}}}{r}=\int_{a^{\circ}}^\infty e^{-rs}{}_sp_x\,ds.
$$
Define $u(\alpha)=\frac{1}{1+\pi}\alpha-\frac{1-e^{-r{\alpha}}}{r}$ and $v(\alpha)=\int_{\alpha}^\infty e^{-rs}{}_sp_x\,ds$, so that the above equation becomes that $u(a^{\circ})=v(a^{\circ})$. Then $v(\alpha)$ is $\downarrow$ in $\alpha$ and $>0$. The function $u$ starts at 0, initially decreases, and then increases forever (and $\to +\infty$). Therefore they cross. Moreover, by the time the two paths cross, we're looking at the intersection of an increasing path with a decreasing one, and there can only be one such crossing. This shows (a). 

Now observe that $H(\alpha)<\frac{1-e^{-r\alpha}}{r}+{}_{\alpha} p_x\int_{\alpha}^\infty e^{-rt}\,dt=\frac{1-e^{-\alpha r}[1-{}_{\alpha} p_x]}{r}<\frac1r$. Therefore $\alpha_0=\frac{1+\pi}{r}$ satisfies $H(\alpha_0)<\frac1r=\frac{1}{1+\pi}\alpha_0$. So ${\widehat{a}^{\circ}}(x)<\alpha_0$, showing (b). Turning to (c), $\frac{\partial H(\alpha;r)}{\partial r}=-\int_0^{\alpha} te^{-rt}\,dt-\int_{\alpha}^\infty te^{-rt}{}_tp_x\,dt<0$. So raising $r$ lowers the path $H$ and so decreases its crossing ${\widehat{a}^{\circ}}(x)$ with the path $\frac{1}{1+\pi}\alpha$. When $r\to 0$, we have $u(\alpha)\to-\frac{\pi}{1+\pi}\alpha\le 0$, while $v(\alpha)$ stays bounded from 0 (for any fixed $\alpha$). This forces ${\widehat{a}^{\circ}}(x)\to\infty$. 

When $r\to\infty$ we have $u(\alpha)\to\frac{1}{1+\pi}\alpha$ while $v(\alpha)\to 0$, which forces ${\widehat{a}^{\circ}}(x)\to 0$. More precisely, rewriting $ru({\widehat{a}^{\circ}})=rv({\widehat{a}^{\circ}})$ in terms of $\beta=r{\widehat{a}^{\circ}}$ yields
$$
\frac{1}{1+\pi}\beta-1+e^{-\beta}=\int_\beta^\infty e^{-s}{}_{\frac{s}{r}}p_x\,ds\to e^{-\beta}
$$
as $r\to\infty$. This simplifies to $\frac{1}{1+\pi}\beta\to 1$, which shows (c). 

Assume a rising hazard rate. Observe that function $u(\alpha)$ does not vary with $x$, while the function $v(\alpha)$ declines with $x$ As above, this forces the crossing ${\widehat{a}^{\circ}}(x)$ to decrease as $x$ rises. When $\pi>0$, we have $u(\alpha)\le 0$ for $\alpha$ in some interval $[0,z]$ that does not depend on $x$. Therefore ${\widehat{a}^{\circ}}(x)>z$ for every $x$. But if $\delta=0$ then $u(\alpha)>0$ for $\alpha>0$. If the hazard rate rises without bound then $v(\alpha)\downarrow 0$ when $x\to\infty$, which forces ${a^{\circ}}(x)$ to $\to 0$. 
\end{proof}

Note that the behaviour of a loaded IRIA at advanced ages (ie approaching a non-zero constant) is quite different from that of a CRIA (blowing up) or a LOIA (approaching 0).

\subsection{Sensitivities}
\label{sec:derivatives}
For completeness, we record general formulas for the various derivatives needed to compute sensitivities, which were then evaluated numerically for tables and figures.
\begin{itemize}
\item CRIA $\frac{d{\widehat{a}^{\star}}}{dr}=-\frac{\partial (\frac{\pi}{1+\pi}\alpha+F-G)}{\partial r}/\frac{\partial (\frac{\pi}{1+\pi}\alpha+F-G)}{\partial \alpha}\Big|_{\alpha=\widehat{a}^{\star}}$, where 
$\frac{\partial(F-G)}{\partial \alpha} = -e^{-r\alpha}{}_{\alpha}p_x -r\int_0^{\alpha}e^{-rt}{}_tp_x\,dt$ and 
$\frac{\partial(F-G)}{\partial r} = -\int_{\alpha}^\infty te^{-rt}{}_tp_x\,dt - \int_0^{\alpha}e^{-rt}(\alpha-t){}_tp_x[1-rt]\,dt$. And at $\alpha=\widehat{a}^{\star}$, $F-G=-\frac{\pi}{1+\pi}\alpha$. 
\item CRIA $\frac{d{\widehat{a}^{\star}}}{dx}=-\frac{\partial (\frac{\pi}{1+\pi}\alpha+F-G))}{\partial x}/\frac{\partial (\frac{\pi}{1+\pi}\alpha+F-G)}{\partial \alpha}\Big|_{\alpha=\widehat{a}^{\star}}$, where
$\frac{\partial(F-G)}{\partial x} = \int_0^{\alpha} e^{-rt}r(\alpha-t){}_tp_x\lambda_{x+t}\,dt-\int_{\alpha}^\infty e^{-rt}{}_tp_x\lambda_{x+t}\,dt+\lambda_x[F-G]$, which after applying integration by parts $=r\alpha-e^{-r\alpha}{}_{\alpha}p_x-r\int_0^{\alpha}e^{-rt}{}_tp_x\,dt+(r+\lambda_x)(F-G)$. 
\item LOIA $\frac{da_x}{dr}=-\int_0^\infty te^{-rt}{}_tp_x\,dt$. 
And the loaded LOIA price is $\widehat{a}_x=(1+\pi)a_x$
\item LOIA $\frac{da_x}{dx}=-(1-[r+\lambda_x]a_x)$
\item IRIA $\frac{d{\widehat{a}^{\circ}}}{dr}=\frac{\partial H}{\partial r}/\Big[\frac{1}{1+\pi}-\frac{\partial H}{\partial \alpha}\Big]\Big|_{\alpha=\widehat{a}^{\circ}}$, where 
$\frac{\partial H}{\partial r}=\frac{e^{-r\alpha}(1+\alpha r)-1}{r^2}-\int_{\alpha}^\infty te^{-rt}{}_tp_x\,dt$, and 
$\frac{\partial H}{\partial \alpha}=e^{-r\alpha}(1-{}_{\alpha} p_x)$. And at $\alpha=\widehat{a}^{\circ}$, $H=\frac{1}{1+\pi}\alpha$. 

\item IRIA $\frac{d{\widehat{a}^{\circ}}}{dx}=\frac{\partial H}{\partial x}/\Big[\frac{1}{1+\pi}-\frac{\partial H}{\partial \alpha}\Big]\Big|_{\alpha=\widehat{a}^{\circ}}$, where
$\frac{\partial H}{\partial x}=\int_{\alpha}^\infty e^{-rs}{}_sp_x[\lambda_x-\lambda_{x+s}]\,ds=-e^{-r\alpha}{}_{\alpha} p_x + (r+\lambda_x)\int_{\alpha}^\infty e^{-rs}{}_sp_x\,ds$.

\end{itemize}

\subsection{Theorems and proofs: CRIA dominates IRIA}
\label{sec:inequality}
\begin{theorem} We have ${\widehat{a}^{\star}}> {\widehat{a}^{\circ}}>\widehat{a}_x$ for any $\pi\ge 0$ and $r>0$.
\end{theorem}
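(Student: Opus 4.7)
The plan is to prove the two inequalities $\widehat{a}^{\circ} > \widehat{a}_x$ and $\widehat{a}^{\star} > \widehat{a}^{\circ}$ separately, each by comparing two of the pricing equations and exploiting monotonicity of an auxiliary function.

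For $\widehat{a}^{\circ} > \widehat{a}_x$ I would simply use that $H(0) = a_x$ and $H'(\alpha) = e^{-r\alpha}(1 - {}_\alpha p_x) > 0$ for $\alpha > 0$, so $H$ is strictly increasing on $[0,\infty)$. Since $\widehat{a}^{\circ} > 0$, the IRIA pricing equation then yields $\widehat{a}^{\circ}/(1+\pi) = H(\widehat{a}^{\circ}) > H(0) = a_x = \widehat{a}_x/(1+\pi)$, which gives the required strict inequality.

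For the harder inequality $\widehat{a}^{\star} > \widehat{a}^{\circ}$, my plan is to show that at any common choice of guarantee period $\alpha > 0$, the actuarial present value $P^{\star}(\alpha) := a_x + \int_0^\alpha(\alpha - s)e^{-rs}{}_sp_x\lambda_{x+s}\,ds$ of a CRIA strictly exceeds that of an IRIA, $P^{\circ}(\alpha) := H(\alpha)$. Rewriting both in probabilistic form and subtracting cancels the common life-only component, leaving
$$
P^{\star}(\alpha) - P^{\circ}(\alpha) = E\Bigl[\mathbf{1}_{T_x < \alpha} \int_{T_x}^{\alpha}(e^{-rT_x} - e^{-rs})\,ds\Bigr],
$$
which is strictly positive for $\alpha > 0$ because $e^{-rs}$ is strictly decreasing in $s$ (and $\Pr[T_x < \alpha] > 0$). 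Intuitively, paying the lump sum $\alpha - T_x$ at the instant of death is more costly to the insurer than spreading the same nominal amount over the following $\alpha - T_x$ years. Applying this at $\alpha = \widehat{a}^{\circ}$ together with the IRIA equation $H(\widehat{a}^{\circ}) = \widehat{a}^{\circ}/(1+\pi)$ gives $(1+\pi)P^{\star}(\widehat{a}^{\circ}) > \widehat{a}^{\circ}$, so that $Q(\alpha) := (1+\pi)P^{\star}(\alpha) - \alpha$ satisfies $Q(\widehat{a}^{\circ}) > 0 = Q(\widehat{a}^{\star})$.

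To pass from this pointwise sign at $\widehat{a}^{\circ}$ to the root comparison $\widehat{a}^{\circ} < \widehat{a}^{\star}$, I would finish by verifying that $Q$ is strictly decreasing. Leibniz's rule gives $P^{\star\prime}(\alpha) = \int_0^\alpha e^{-rs}{}_sp_x\lambda_{x+s}\,ds$, which increases in $\alpha$ up to the limit $A_x = 1-ra_x$ (using the no-arbitrage identity from Section \ref{sec:notation}), so $Q'(\alpha) = (1+\pi)P^{\star\prime}(\alpha) - 1$ is bounded above by $(1+\pi)A_x - 1 = \pi - (1+\pi)ra_x$; the viability hypothesis $ra_x > \pi/(1+\pi)$ used throughout Section \ref{sec:loadedCRIA} makes this limit strictly negative, so $Q' < 0$ everywhere. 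The main obstacle is the pointwise inequality $P^{\star} > P^{\circ}$ itself, which is the genuine content of the theorem; once it is in hand, the translation into a comparison of the two roots is routine bookkeeping that leans on the viability condition already established.
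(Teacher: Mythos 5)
Your proof is correct, but it takes a genuinely different route from the paper's. The paper recasts all three pricing equations in the common form $\frac{\pi}{1+\pi}\alpha + F(\alpha) = G_i(\alpha)$, where $F(\alpha)=\int_\alpha^\infty e^{-rt}{}_tp_x\,dt$ is the post-guarantee liability and $G_i$ is the present value of interest thrown off by the ``phase-one'' account for each product ($G_1$ for LOIA, $G_2$ for IRIA, $G_3$ for CRIA); the three inequalities then follow uniformly from the elementary pointwise comparison $G_1(\alpha)>G_2(\alpha)>G_3(\alpha)$ (each a one-line consequence of ${}_tp_x<1$) together with the previously established uniqueness of each crossing. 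You instead prove $\widehat{a}^{\circ}>\widehat{a}_x$ from strict monotonicity of $H$, and for $\widehat{a}^{\star}>\widehat{a}^{\circ}$ you compare the full actuarial present values at a common $\alpha$, showing $P^{\star}(\alpha)>H(\alpha)$ via the identity $P^{\star}(\alpha)-H(\alpha)=E\bigl[\mathbf{1}_{T_x<\alpha}\int_{T_x}^{\alpha}(e^{-rT_x}-e^{-rs})\,ds\bigr]>0$ — a clean financial statement that a lump-sum refund at death is discounted less than the same nominal amount paid in instalments — and then convert this into a comparison of roots by checking that $Q(\alpha)=(1+\pi)P^{\star}(\alpha)-\alpha$ is strictly decreasing, using $Q'(\alpha)<(1+\pi)A_x-1=\pi-(1+\pi)ra_x<0$ under the viability condition $\frac{\pi}{1+\pi}<ra_x$. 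That last appeal to viability is legitimate (it is exactly the condition under which $\widehat{a}^{\star}$ exists, so the theorem is vacuous otherwise), but it is a dependence the paper's argument does not make explicit; in exchange, your pointwise APV comparison isolates the economic reason the cash refund dominates, whereas the paper's account-based decomposition handles all three products with a single mechanism and more trivial integral estimates. Both arguments are sound.
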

\begin{proof}
Recall that $F(\alpha)=\int_{\alpha}^\infty e^{-rt}{}_tp_x\,dt$ represents the insurer's liability due to annuitants who live beyond age $\alpha$. We imagine that the purchase price $\alpha$ is deposited into what we earlier called the phase-one account, which is used for the first $\alpha$ of payments, while interest earned goes into the phase-two account. Loading requires the latter to cover payments to the insurer, whose present value amounts to $\frac{\pi}{1+\pi}\alpha$, in addition to the liability $F(\alpha)$. If $G_i(\alpha)$ represents the present value of interest generated by the payout account (with $i=1,2,3$   corresponding respectively to the LOIA, IRIA, and CRIA versions), then each respective price is determined by matching this interest against liabilities, ie by solving $\frac{\pi}{1+\pi}\alpha + F(\alpha) = G_i(\alpha)$. Since each $G_i(0)=0$ and $F(0)>0$, the desired inequality will follow if we can show that $G_1(\alpha)>G_2(\alpha)>G_3(\alpha)$ for every $\alpha$.  For a CRIA, $G_3(\alpha)=\int_0^{\alpha} e^{-rt}r(\alpha-t){}_tp_x\,dt$ (which was earlier denoted simply $G(\alpha)$). For an IRIA, $G_2(\alpha)=\alpha-\int_0^{\alpha}e^{-rs}\,ds=\alpha-\frac{1-e^{-r\alpha}}{r}$. For a LOIA, $G_1(\alpha)=\alpha-\int_0^{\alpha}e^{-rt}{}_tp_x\,dt$. Because ${}_tp_x<1$, it is clear from the integrals that $G_1(\alpha)<G_2(\alpha)$. For the same reason, $G_3(\alpha)<\int_0^{\alpha} e^{-rt}r(\alpha-t)\,dt$, which is easily seen to $=G_2(\alpha)$. 
\end{proof}

\subsection{Macaulay Duration}
\label{sec:Macaulay}

As described earlier, our definition of duration differs from traditional Macaulay duration in the case of CRIAs and IRIAs. But for completeness, we give the Macaulay versions here. If cash flows at rate $\phi_t$ at time $t$, the associated Macaulay duration is $\frac{\int_0^\infty t e^{-rt}\phi_t\,dt}{\int_0^\infty e^{-rt}\phi_t\,dt}$, measured in units of time. In other words the arrival times $t$ are weighted by the present values $e^{-rt}\phi_t$ of the cash flows. So if $M(r)=\int_0^\infty e^{-rt}\phi_t\,dt$, life annuity duration is $-\frac{1}{M}\frac{\partial M}{\partial r}$. Note that the cash flows $\phi_t$ are being held constant here, rather than adjusting with $r$. In the case of a CRIA, the present value of cash flows is $M(r,{\widehat{a}^{\star}})={\widehat{a}^{\star}}+F({\widehat{a}^{\star}},r)-G({\widehat{a}^{\star}},r)$ where ${\widehat{a}^{\star}}$ is obtained by solving $\frac{1}{1+\pi}\widehat{a}^{\star}=M(r,{\widehat{a}^{\star}})$. Therefore the formula for Macaulay duration, as defined above, now becomes $-\frac{1+\pi}{{\widehat{a}^{\star}}}\frac{\partial M}{\partial r}$. This is now easily computed. We can understand the asymptotics as $x$ increases to the maximum feasible age. As that happens, we know ${\widehat{a}^{\star}}\to\infty$. Based on the discussion in Section \ref{sec:Analysis} of the paper, life annuity duration converges to $(1+\pi)\int_0^\infty e^{-rt}{}_tp_x(1-rt)\,dt>0$. To see the inequality, observe that the sign of the integrand and the $\downarrow$ nature of ${}_tp_x$ makes this $>\int_0^\infty e^{-rt}{}_{\frac1r}p_x(1-rt)\,dt$, which $=0$ by calculus.

\subsection{Life Annuity Duration} 
\label{sec:annuityduration}
In the case of a bond or a LOIA, Macaulay duration agrees with {\em life annuity duration} $Dr[a]$ as defined in Section \ref{sec:Macaulay}. As above, for a CRIA, the actuarial present value of payments is $M(r,{\widehat{a}^{\star}})={\widehat{a}^{\star}}+F({\widehat{a}^{\star}},r)-G({\widehat{a}^{\star}},r)$.
Therefore with $Dr[\widehat{a}^{\star}]=-\frac{1}{{\widehat{a}^{\star}}}\frac{d{\widehat{a}^{\star}}}{dr}$, and Macaulay duration as defined as above, it follows from the formulas of Section \ref{sec:derivatives} that  
$$
\text{Life Annuity Duration} = \frac{\text{Macaulay Duration}}{(1+\pi)\frac{\partial(\frac{\pi}{1+\pi} {\widehat{a}^{\star}}+F-G)}{\partial {\widehat{a}^{\star}}}}.
$$
One can easily see the relationship between the two measure and they are clearly not equal. Consider now the asymptotics of $Dr[\widehat{a}^{\star}]$ as $x$ rises to the maximal feasible age. ${\widehat{a}^{\star}}$ is the value of $\alpha$ at which the convex function $F(\alpha)-G(\alpha)$ equals the function $-\frac{\pi}{1+\pi} \alpha$. If that crossing takes place at very large $\alpha$, that forces the graphs of the two functions to run nearly parallel. At the limiting value of $x$, the convex curve will in fact be asymptotic to the line. In other words, as $x$ approach this value, $\lim{\alpha\to\infty}\frac{\partial(F-G)}{\partial \alpha}\uparrow -\frac{\pi}{1+\pi}$. Therefore the denominator in the fraction displayed above will $\downarrow 0$, so $Dr[\widehat{a}^{\star}]\to \infty$. That is, $\frac{d{\widehat{a}^{\star}}}{dr}$ blows up even faster than ${\widehat{a}^{\star}}$ does. 

\subsection{Asymptotics for exponential mortality}
\label{sec:exponential}
An example where explicit calculations are possible is that of exponential mortality, at rate $\lambda$. Here are the asymptotics of prices and {\em life annuity durations} as $r\downarrow 0$, in the unloaded case. For CRIA and IRIA we give two terms in the expansion. 
\begin{itemize}
\item{LOIA:} price $=\frac{1}{r+\lambda}\to \frac{1}{\lambda}$ and life annuity duration $=\frac{1}{r+\lambda}\to \frac{1}{\lambda}$.
\item{CRIA:} price $\approx\frac{1}{\lambda}\Big[\log(\frac1r)-\log\log(\frac1r)\Big]$ and life annuity duration $\approx\frac{1}{r\log(\frac1r)}\Big[1+\frac{\log\log(\frac1r)}{\log(\frac1r)}\Big]$
\item{IRIA:} price $\approx\frac{1}{\lambda}\Big[\log(\frac1r)-2\log\log(\frac1r)\Big]$ and life annuity duration $\approx\frac{1}{r\log(\frac1r)}\Big[1+2\frac{\log\log(\frac1r)}{\log(\frac1r)}\Big]$
\end{itemize}
We see that the LOIA price stays bounded while the IRIA and CRIA prices blow up, but with the CRIA price higher. But the IRIA duration is the highest.

Here are the asymptotics of prices and {\em life annuity durations} as $r\to\infty$. Again, we give two terms in the expansion. 
\begin{itemize}
\item{LOIA:} price $=\frac{1}{r+\lambda}\approx \frac{1}{r}\Big[1-\frac{\lambda}{r}\Big]$ and life annuity duration $=\frac{1}{r+\lambda}\approx \frac{1}{r}\Big[1-\frac{\lambda}{r}\Big]$.
\item{CRIA:} price $\approx\frac{1}{r}\Big[1-\frac{\lambda}{r}(1-\frac1e)\Big]$ and life annuity duration $\approx\frac{1}{r}\Big[1-\frac{\lambda}{r}(1-\frac1e)\Big]$
\item{IRIA:} price $\approx\frac{1}{r}\Big[1-\frac{\lambda}{r}(\frac2e)\Big]$ and life annuity duration $\approx\frac{1}{r}\Big[1-\frac{\lambda}{r}(\frac2e)\Big]$
\end{itemize}
Since $1>\frac2e>1-\frac1e$, here the CRIA values $>$ the IRIA values $>$ the LOIA values. 

\section{Conclusion}
\label{sec:conclusions}

Although income annuities are the least popular of all annuities sold in the U.S., comprising less than $5\%$ of all annuities sold in the U.S. during the first quarter of 2021 according to a survey by LIMRA, they continue to be the focus of much academic and scholarly interest. And, as noted in the literature review (Section \ref{sec:literature}), most of the research on income annuities tends to focus on contracts with life only features, or those with fixed guarantee periods, whereas the majority of sales include a cash refund and instalment refund feature. Moreover, as noted in the introduction, in late September 2021 approximately 77 million Americans with a defined contribution plan will come face-to-face with annuity illustrations on their statements.

\vspace{0.1in} 

This is more than just an empirical observation about the type of insurance riders that are preferred by consumers. Although motivated by these empirical choices the main message of this paper is a theoretical one, namely that cash refund income annuities lead to some very interesting and complex valuation problems. As proved in Section \ref{sec:theorems}, if valuation rates sink below a certain threshold the CRIA is no longer viable and can no longer be offered. This might help explain why inflation-linked income annuities no longer exist in the U.S. market place; the term structure of real rates have likely declined under that critical threshold.

\vspace{0.1in}

There are a number of possible avenues for further research that follow from the insights provided by this introductory article. First, it would be very interesting to dig deeper into the money’s worth ratio (MWR) of cash-refund income annuities (IA) to see whether their values are consistently higher that the MWRs of life-only and period certain IAs. It’s quite possible CRIA reduces the extent of anti-selection and that MWR values are higher for the consumer. Table  \ref{MWRtable} provides preliminary evidence of that.

\vspace{0.1in}

Second, and as a follow-up, perhaps CRIAs and their embedded life insurance provide an ``optimized” way of fulfilling a bequest motive, which might help explain the relative demand for cash refunds over life only riders. Exploring how anti-selection and mortality heterogeneity affects the valuation and pricing of cash refund income annuities, is yet another avenue worth pursuing. While there is a large body of literature focused on optimal life-only annuitization strategies, it would be interesting to explore how those result change once a cash-refund product is introduced and competes with a life only version.

\vspace{0.1in}

Third and finally, this entire paper has been predicated on the {\em law of large numbers} that assumes the insurance company issuing the IA is selling enough so that mortality risk is entirely diversifiable and pricing is done by (risk neutral) expectations. In reality of course annuitant pools are of a finite size, especially as the overall demand for income annuities is rather thin. Just as importantly, mortality rates are {\em not} deterministic and Covid-19 was just one data point that supports a {\em stochastic mortality} perspective. It would be interesting to measure the extent to which {\em ruin probabilities} and the corresponding amount of capital required is higher for CRIA. In time-honoured tradition we leave these for subsequent research.

\clearpage
\bibliographystyle{jf}
\bibliography{CRIAbib}

\begin{thebibliography}{53}
\expandafter\ifx\csname natexlab\endcsname\relax\def\natexlab#1{#1}\fi

\bibitem[Alexandrova and Gatzert(2019)]{AG2019}
Alexandrova, M., and N.~Gatzert, 2019, What do we know about annuitization
  decisions?, {\em Risk Management and Insurance Review\/} 22, 57--100.

\bibitem[Bauer et~al.(2008)Bauer, Kling, and Russ]{BKR2008}
Bauer, D., A.~Kling, and J.~Russ, 2008, A universal pricing framework for
  guaranteed minimum benefits in variable annuities,, {\em ASTIN Bulletin: The
  Journal of the IAA\/} 38, 621--651.

\bibitem[Benartzi et~al.(2011)Benartzi, Previtero, and Thaler]{BPT2011}
Benartzi, S., A.~Previtero, and R.~H. Thaler, 2011, Annuitization puzzles, {\em
  Journal of Economic Perspectives\/} 25, 143--164.

\bibitem[Blanchett et~al.(2021)Blanchett, Finke, and Nikolic]{BFN2021}
Blanchett, D., M.~Finke, and B.~Nikolic, 2021, How competitive are income
  annuity providers over time?, {\em Risk Management and Insurance Review\/}
  24, 207--214.

\bibitem[Bodie(1990)]{B1990}
Bodie, Z., 1990, Pensions as retirement income insurance, {\em Journal of
  Economic Literature\/} 28, 28--49.

\bibitem[Bommier and Le~Grand(2014)]{BL2014}
Bommier, A., and F.~Le~Grand, 2014, Too risk averse to purchase insurance?,
  {\em Journal of Risk and Uncertainty\/} 48, 135--166.

\bibitem[Bowers et~al.(1997)Bowers, Gerber, Hickman, Jones, and
  Nesbitt]{BGHJN1997}
Bowers, N.~L., H.~U. Gerber, J.~C. Hickman, D.~A. Jones, and C.J. Nesbitt,
  1997, {\em Actuarial Mathematics\/} (Society of Actuaries, Schaumburg, IL.).

\bibitem[Boyer et~al.(2020)Boyer, Box-Couillard, and Michaud]{BBM2020}
Boyer, M.~M., S.~Box-Couillard, and P.~C. Michaud, 2020, Demand for annuities:
  Price sensitivity, risk perceptions, and knowledge, {\em Journal of Economic
  Behavior and Organization\/} 180, 883--902.

\bibitem[Brown et~al.(2001)Brown, Mitchell, Poterba, and Warshawsky]{BMPW2001}
Brown, J.~R., O.S. Mitchell, J.M. Poterba, and M.J. Warshawsky, 2001, {\em The
  Role of Annuity Markets in Financing Retirement\/} (MIT Press, Cambridge).

\bibitem[Brown et~al.(2008)Brown, Kling, Mullainathan, and Wrobel]{BKMW2008}
Brown, J.R., J.R. Kling, S.~Mullainathan, and M.V. Wrobel, 2008, Why don't
  people insure late-life consumption? a framing explanation of the
  under-annuitization puzzle, {\em American Economic Review\/} 98, 304--309.

\bibitem[Cannon and Tonks(2008)]{CT2008}
Cannon, E., and I.~Tonks, 2008, {\em Annuity Markets\/} (Oxford University
  Press, New York).

\bibitem[Chai et~al.(2011)Chai, Horneff, Maurer, and Mitchell]{CHMM2011}
Chai, J., W.~Horneff, R.~Maurer, and O.~S. Mitchell, 2011, Optimal portfolio
  choice over the life cycle with flexible work, endogenous retirement, and
  lifetime payouts, {\em Review of Finance\/} 15, 875--907.

\bibitem[Charupat et~al.(2016)Charupat, Kamstra, and Milevsky]{CKM2016}
Charupat, N., M.~J. Kamstra, and M.~A. Milevsky, 2016, The sluggish and
  asymmetric reaction of life annuity prices to changes in interest rates, {\em
  Journal of Risk and Insurance\/} 83, 519--555.

\bibitem[Davidoff et~al.(2005)Davidoff, Brown, and Diamond]{DBD2005}
Davidoff, T., J.H. Brown, and P.A. Diamond, 2005, Annuities and individual
  welfare, {\em American Economic Review\/} 95, 1573--1590.

\bibitem[Davies(1981)]{D1981}
Davies, J.B., 1981, Uncertain lifetime, consumption and dissaving in
  retirement, {\em Journal of Political Economy\/} 89, 561--577.

\bibitem[Dickson et~al.(2009)Dickson, Hardy, and Waters]{DHW2009}
Dickson, D.C.M., M.R. Hardy, and H.~R. Waters, 2009, {\em Actuarial mathematics
  for life contingent risks\/} (Cambridge University Press, U.K.).

\bibitem[Finkelstein and Poterba(2004)]{FP2004}
Finkelstein, A., and J.M. Poterba, 2004, Adverse selection in insurance
  markets: Policyholder evidence from the u.k. annuity market, {\em Journal of
  Political Economy\/} 112, 183--208.

\bibitem[Finkelstein and Poterba(2014)]{FP2014}
Finkelstein, A., and J.M. Poterba, 2014, Testing for asymmetric information
  using unused observables in insurance markets: Evidence from the uk annuity
  market, {\em Journal of Risk and Insurance\/} 81, 709--734.

\bibitem[Friedman and Warshawsky(1990)]{FW1990}
Friedman, B.~M., and M.~J. Warshawsky, 1990, The cost of annuities:
  Implications for saving behavior and bequests, {\em The Quarterly Journal of
  Economics\/} 105, 135--154.

\bibitem[Gompertz(1825)]{G1825}
Gompertz, B., 1825, On the nature of the function expressive of the law of
  human mortality and on a new mode of determining the value of life
  contingencies, {\em Philosophical Transactions of the Royal Society of
  London\/} 115, 513--583.

\bibitem[Gong and Webb(2010)]{GW2010}
Gong, G., and A.~Webb, 2010, Evaluating the advanced life deferred annuity: An
  annuity people might actually buy, {\em Insurance: Mathematics and
  Economics\/} 46, 210--221.

\bibitem[Haberman and Sibbett(1995)]{HS1995}
Haberman, S., and T.A. Sibbett, eds., 1995, {\em History of Actuarial Science
  (10 volumes)\/} (William Pickering, London).

\bibitem[Habib et~al.(2020)Habib, Huang, Mauskopf, Nikolic, and
  Salisbury]{HHMNS2020}
Habib, F., H.~Huang, A.~Mauskopf, B.~Nikolic, and T.S. Salisbury, 2020, Optimal
  allocation to deferred income annuities, {\em Insurance: Mathematics and
  Economics\/} 90, 94--104.

\bibitem[Halley(1693)]{H1693}
Halley, E., 1693, An estimate of the degrees of the mortality of mankind, drawn
  from the curious tables of the births and funerals at the city of breslaw,
  {\em Philosophical Transactions of the Royal Society of London\/} 17,
  596--610.

\bibitem[Horneff et~al.(2020)Horneff, Maurer, and Mitchell]{HMM2020}
Horneff, V., R.~Maurer, and O.~S. Mitchell, 2020, Putting the pension back in
  401 (k) retirement plans: Optimal versus default deferred longevity income
  annuities, {\em Journal of Banking and Finance (in press)\/} 414.

\bibitem[Horneff et~al.(2009)Horneff, Maurer, Mitchell, and Stamos]{HMMS2009}
Horneff, W.~J., R.~H. Maurer, O.~S. Mitchell, and M.~Z. Stamos, 2009, Asset
  allocation and location over the life cycle with investment-linked
  survival-contingent payouts, {\em Journal of Banking and Finance\/} 33,
  1688--1699.

\bibitem[Huang et~al.(2017{\natexlab{a}})Huang, Milevsky, and Young]{HMY2017}
Huang, H., M.A. Milevsky, and V.R. Young, 2017{\natexlab{a}}, Optimal
  purchasing of deferred income annuities when payout yields are
  mean-reverting, {\em Review of Finance\/} 21, 327--361.

\bibitem[Huang et~al.(2017{\natexlab{b}})Huang, Zeng, and Kwok]{HZK2017}
Huang, Y.~T., P.~Zeng, and Y.~K. Kwok, 2017{\natexlab{b}}, Optimal initiation
  of guaranteed lifelong withdrawal benefit with dynamic withdrawals, {\em SIAM
  Journal on Financial Mathematics\/} 8, 804--840.

\bibitem[Inkmann et~al.(2011)Inkmann, Lopes, and Michaelides]{ILM2011}
Inkmann, J., P.~Lopes, and A.~Michaelides, 2011, How deep is the annuity market
  participation puzzle?, {\em The Review of Financial Studies\/} 24, 279--319.

\bibitem[Kingston and Thorp(2005)]{KT2005}
Kingston, G., and S.~J. Thorp, 2005, Annuitization and asset allocation with
  hara utility, {\em Journal of Pension Economics and Finance\/} 4, 225--248.

\bibitem[Knoller(2016)]{K2016}
Knoller, C., 2016, Multiple reference points and the demand for
  principal-protected life annuities: An experimental analysis, {\em Journal of
  Risk and Insurance\/} 83, 163--179.

\bibitem[Koijen et~al.(2011)Koijen, Nijman, and Werker]{KNW2011}
Koijen, R.~S., T.~E. Nijman, and B.~J. Werker, 2011, Optimal annuity risk
  management, {\em Review of Finance\/} 15, 799--833.

\bibitem[Merton(2014)]{M2014}
Merton, R.~C., 2014, The crisis in retirement planning, {\em Harvard Business
  Review\/} 92, 43--50.

\bibitem[Milevsky(2006)]{M2006}
Milevsky, M.~A., 2006, {\em The Calculus of Retirement Income: Financial Models
  for Pension Annuities and Life Insurance\/} (Cambridge University Press, New
  York).

\bibitem[Milevsky and Salisbury(2006)]{MS2006}
Milevsky, M.~A., and T.~S. Salisbury, 2006, Financial valuation of guaranteed
  minimum withdrawal benefits, {\em Insurance: Mathematics and Economics\/} 38,
  21--38.

\bibitem[Milevsky and Young(2007)]{MY2007}
Milevsky, M.~A., and V.~R. Young, 2007, Annuitization and asset allocation,
  {\em Journal of Economic Dynamics and Control\/} 31, 3138--3177.

\bibitem[Mitchell et~al.(1999)Mitchell, Poterba, Warshawsky, and
  Brown]{MPWB1999}
Mitchell, O.~S., J.~M. Poterba, M.~J. Warshawsky, and J.~R. Brown, 1999, New
  evidence on the money's worth of individual annuities, {\em American Economic
  Review\/} 89, 1299--1318.

\bibitem[Modigliani(1988)]{M1988}
Modigliani, F., 1988, The role of intergenerational transfers and life cycle
  saving in the accumulation of wealth, {\em Journal of Economic
  Perspectives\/} 2, 15--40.

\bibitem[Moenig and Bauer(2016)]{MB2016}
Moenig, T., and D.~Bauer, 2016, Revisiting the risk-neutral approach to optimal
  policyholder behavior: A study of withdrawal guarantees in variable
  annuities, {\em Review of Finance\/} 20, 759--794.

\bibitem[Ngai and Sherris(2011)]{NS2011}
Ngai, A., and M.~Sherris, 2011, Longevity risk management for life and variable
  annuities: The effectiveness of static hedging using longevity bonds and
  derivatives, {\em Insurance: Mathematics and Economics\/} 49, 100--114.

\bibitem[Pashchenko(2013)]{P2013}
Pashchenko, S., 2013, Accounting for non-annuitization, {\em Journal of Public
  Economics\/} 98, 53--67.

\bibitem[Pitacco(2016)]{P2016}
Pitacco, E., 2016, Guarantee structures in life annuities: A comparative
  analysis, {\em The Geneva Papers on Risk and Insurance-Issues and Practice\/}
  41, 78--97.

\bibitem[Poterba(1997)]{P1997}
Poterba, J.~M., 1997, The history of annuities in the united states, {\em
  National Bureau of Economic Research (NBER), Working Paper\/} 6001.

\bibitem[Poterba and Solomon(2021)]{PS2021}
Poterba, J.~M., and A.~Solomon, 2021, Discount rates, mortality projections,
  and money's worth calculations for us individual annuities, {\em National
  Bureau of Economic Research (NBER), Working Paper\/} 28557.

\bibitem[Poterba(2014)]{P2014}
Poterba, J.M., 2014, Retirement security in an aging population, {\em American
  Economic Review\/} 104, 1--30.

\bibitem[Promislow(2006)]{P2006}
Promislow, S.~D., 2006, {\em Fundamentals of actuarial mathematics\/} (John
  Wiley \& Sons, Toronto).

\bibitem[Reichling and Smetters(2015)]{RS2015}
Reichling, F., and K.~Smetters, 2015, Optimal annuitization with stochastic
  mortality and correlated medical costs, {\em American Economic Review\/} 105,
  3273--3320.

\bibitem[Sheshinski(2008)]{S2008}
Sheshinski, E., 2008, {\em The Economic Theory of Annuities\/} (Princeton
  University Press, Princeton).

\bibitem[Sheshinski(2010)]{S2010}
Sheshinski, E., 2010, Refundable annuities (annuity options), {\em Journal of
  Public Economic Theory\/} 12, 7--21.

\bibitem[Steinorth and Mitchell(2015)]{SM2015}
Steinorth, P., and O.~S. Mitchell, 2015, Valuing variable annuities with
  guaranteed minimum lifetime withdrawal benefits, {\em Insurance: Mathematics
  and Economics\/} 64, 246--258.

\bibitem[Wettstein et~al.(2021)Wettstein, Munnell, Hou, and Gok]{WMHG2021}
Wettstein, G., Al. Munnell, W.~Hou, and N.~Gok, 2021, The value of annuities,
  {\em SSRN: https://ssrn.com/abstract=3797822\/} .

\bibitem[Xu et~al.(2018)Xu, Chen, Coleman, and Coleman]{XCCC2018}
Xu, W., Y.~Chen, C.~Coleman, and T.~F. Coleman, 2018, Moment matching machine
  learning methods for risk management of large variable annuity portfolios,
  {\em Journal of Economic Dynamics and Control\/} 87.

\bibitem[Yaari(1965)]{Y1965}
Yaari, M.~E., 1965, Uncertain lifetime, life insurance, and the theory of the
  consumer, {\em The Review of Economic Studies\/} 32, 137--150.

\end{thebibliography}

\newpage

\section{Appendix: R-script for computing CRIA values}
\label{sec:appendix}

The following R-script uses the bisection method (with 100 iterations) together with R's built-in {\tt integrate} function to compute the value of a cash-refund income annuity (CRIA) at age $x$, under a Gompertz law of mortality with parameters $(m,b)$ and discount rate $r$. The integration scheme is based on the representation provided in equation \eqref{twoaccountversion} in the body of the paper. In Section \ref{sec:theorems} we used functions $F$ and $G$ to represent the left and right-hand side. The resulting annuity price ${a^{\star}}$ would be accurate to within \$1 per million premium.
\begin{Verbatim}[frame=single]
CRIA<-function(x,r,m,b){
a_left<-0; a_right<-1/r
for (i in 1:100){
a<-(a_left+a_right)/2
FINT<-function(t){exp(-r*t)*exp(exp((x-m)/b)*(1-exp(t/b)))}
GINT<-function(t){exp(-r*t)*r*(a-t)*exp(exp((x-m)/b)*(1-exp(t/b)))}
f<-integrate(FINT,a,Inf)$value-integrate(GINT,0,a)$value
if (abs(f)<0.000001) {break}
if (f<0){a_right<-a} else {a_left<-a}
} a }
\end{Verbatim}
For example, assuming $(x=65,m=90,b=10)$, the annual income generated by a premium of $1,\!000,\!000$ in a CRIA, with discount rates ranging from $r=1\%$ to $r=4\%$ would range from \$51,164 to \$67,103 per year (paid in continuous time) and computed as:
\begin{Verbatim}[frame=single]
> 1000000/CRIA(65,0.02,90,10)
[1] 51164.71
> 1000000/CRIA(65,0.03,90,10)
[1] 59169.89
> 1000000/CRIA(65,0.04,90,10)
[1] 67103.97
\end{Verbatim}
For comparison, the life-only version at $r=2\%$, would generate \$7,508 more income.
\begin{Verbatim}[frame=single]
> x<-65; m<-90; b<-10; r<-0.02
> FINT<-function(t){exp(-r*t)*exp(exp((x-m)/b)*(1-exp(t/b)))}
> a<-integrate(FINT,0,Inf)$value
> 1000000/a
[1] 58672.44
\end{Verbatim}

\clearpage


\begin{table}[t]
\begin{center}
\begin{tabular}{||c||c||c||}
\hline\hline
\textbf{Type of income Annuity} & \textbf{Q1.2021} & \textbf{Q4.2011} \\ \hline\hline
Life Only (with no guarantee) & 10.6\% & 25.3\% \\ \hline
Life with Period Certain & 30.0\% & 56.2\% \\ \hline
Refundable: Cash or Instalment & 59.4\% & 18.5\% \\ \hline
{\bf TOTAL:} & {\bf 100\%} & {\bf 100\%} \\ \hline
\end{tabular}
\caption{Market quotes for Income Annuities (IA) in the U.S., compiled directly by the authors based on aggregate data provided by CANNEX Financial Exchanges from their quarterly distributor activity survey experience. Note over the last decade the sharp increase in demand for {\em cash refund} and {\em instalment refund} IAs, and corresponding decline in {\em life only} IAs.}
\label{salesdata}
\end{center}
\vspace{-0.2in}
\end{table}

\clearpage

\begin{table}[h]
\begin{center}
\begin{tabular}{||c||c|c||}
\hline \hline
Age $(x)$ & $r=2\%$ & $r=4\%$ \\ \hline
 55 & 22.12615  & 16.82003   \\ \hline
 65 & 17.04378  & 13.73359   \\ \hline
 75 & 11.91615 & 10.17229 \\ \hline\hline
\end{tabular}
\medskip
\caption{Life Only Income Annuity (LOIA) prices $a$, with Gompertz mortality ($m=90, b=10$). For example under a $2\%$ valuation rate with no loadings, a \$100,000 premium paid at the age of 65 would generate an income of $\$5,867.24=\frac{100000}{17.04378}$ per year, but would terminate upon death of the annuitant.}
\label{LOIA_numbers}
\end{center}
\end{table}

\begin{table}[h]
\begin{center}
\begin{tabular}{||c||c|c||}
\hline \hline
Age $(x)$ & $r=2\%$ & $r=4\%$ \\ \hline
 55 & 23.79569  & 17.47113   \\ \hline
 65 & 19.54472  & 14.90225   \\ \hline
 75 & 15.19471 & 11.97156 \\ \hline\hline
\end{tabular}
\medskip
\caption{Cash Refund Income Annuity (CRIA) prices, with Gompertz mortality $(m=90,b=10)$. For example under a $2\%$ valuation rate with no loadings, a \$100,000 premium at age 65 would generate an income of $\$5,116.47=\frac{100000}{19.54472}$ per year, but upon death of the annuitant the beneficiary would receive: $\$100,000$ minus the cumulative income received, if positive.}
\label{CRIA_numbers}
\end{center}
\end{table}

\begin{table}[h]
\begin{center}
\begin{tabular}{||c||c|c||}
\hline \hline
Age $(x)$ & $r=2\%$ & $r=4\%$ \\ \hline
 55 & 23.55514  & 17.34376   \\ \hline
 65 & 19.18235  & 14.68173   \\ \hline
 75 & 14.7048 & 11.63911 \\ \hline\hline
\end{tabular}
\medskip
\caption{Instalment Refund Income Annuity (IRIA) prices, with Gompertz mortality ($m=90,b=10$). For example under a $2\%$ valuation rate with no loadings a \$100,000 premium at age 65 would generate $\$5,213.13=\frac{100000}{19.18235}$ per year, which is slightly more than the CRIA. Upon death of the annuitant the beneficiary would continue to receive $\$5,213$ until the $\$100,000$ premium was returned, instead of a lump-sum.}
\label{IRIA_numbers}
\end{center}
\end{table}

\clearpage

\begin{table}[h]
\begin{center}
\begin{tabular}{||c||c|c|c||}
\hline\hline
Age & \multicolumn{3}{c||}{Insurance Loading:}\\ \hline
 $(x)$ & $\pi=5\%$ \,& $\pi=15\%$ & $\pi=25\%$ \\ \hline\hline
 55 &16  & 46  & 74  \\ \hline
 65 & 23  & 65  & 105 \\ \hline
 75 & 35 & 101 & 163 \\ \hline\hline
\end{tabular}
\medskip
\caption{The lowest viable valuation rate denoted by $r_{\pi, x}$, expressed in basis points, under which the Cash Refund Income Annuity (CRIA) is still feasible with Gompertz Mortality ($m=90$ and $b=10$). Thus, for example, an inflation-adjusted CRIA would not be feasible at $x=75$, under a loading of $\pi=15\%$, if real rates used for pricing fall under $r_{0.15, 75}=1.01\%$.}
\label{viabler}
\end{center}
\end{table}

\begin{table}[h]
\begin{center}
\begin{tabular}{||c||c|c||c|c||}
\hline\hline
{\bf Age \&} & \multicolumn{4}{c||}{{\bf Real \& Live Quotes for \$100,000 in IA Premium}} \\ \hline 
{\bf Gender} & {\bf Life Only IA} & $\rightarrow$ {\em MWR} & {\bf Cash Refund IA} & $\rightarrow$ {\em MWR} \\ \hline 
65 M & \$5,844 & 0.996 & \$5,280 & 1.031 \\ \hline
65 F & \$5,556 & 1.005 & \$5,112 & 1.043 \\ \hline
80 M & \$10,524 & 1.002 & \$7,788 & 1.017 \\ \hline
80 F & \$9,636 & 1.008 & \$7,428 & 1.033 \\ \hline
\end{tabular}
\medskip
\caption{The {\em Money's Worth Ratio} is computed using an $r=2\%$ valuation rate, under Gompertz mortality with $m=90$ for males, $m=92$ for females and $b=10$ years for both. Data source is the website of {\em Fidelity Investments} on 10 July 2021, using non-qualified funds. Note {\em MWR} is higher for the cash-refund IA when the same mortality assumptions are used.}
\label{MWRtable}
\end{center}
\end{table}


\clearpage

\begin{figure}
\begin{center}
\includegraphics[width=0.45\textwidth]{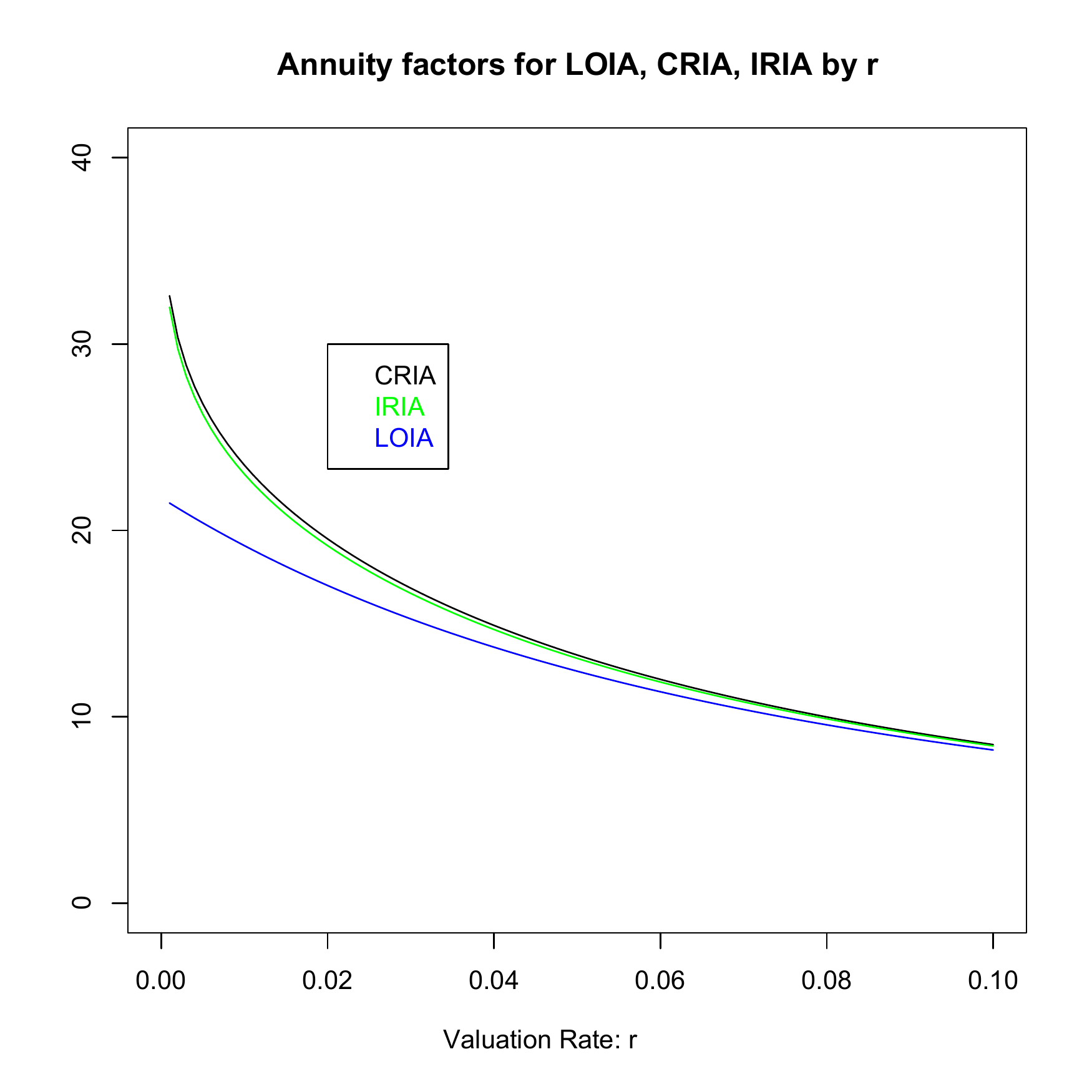} 
\includegraphics[width=0.45\textwidth]{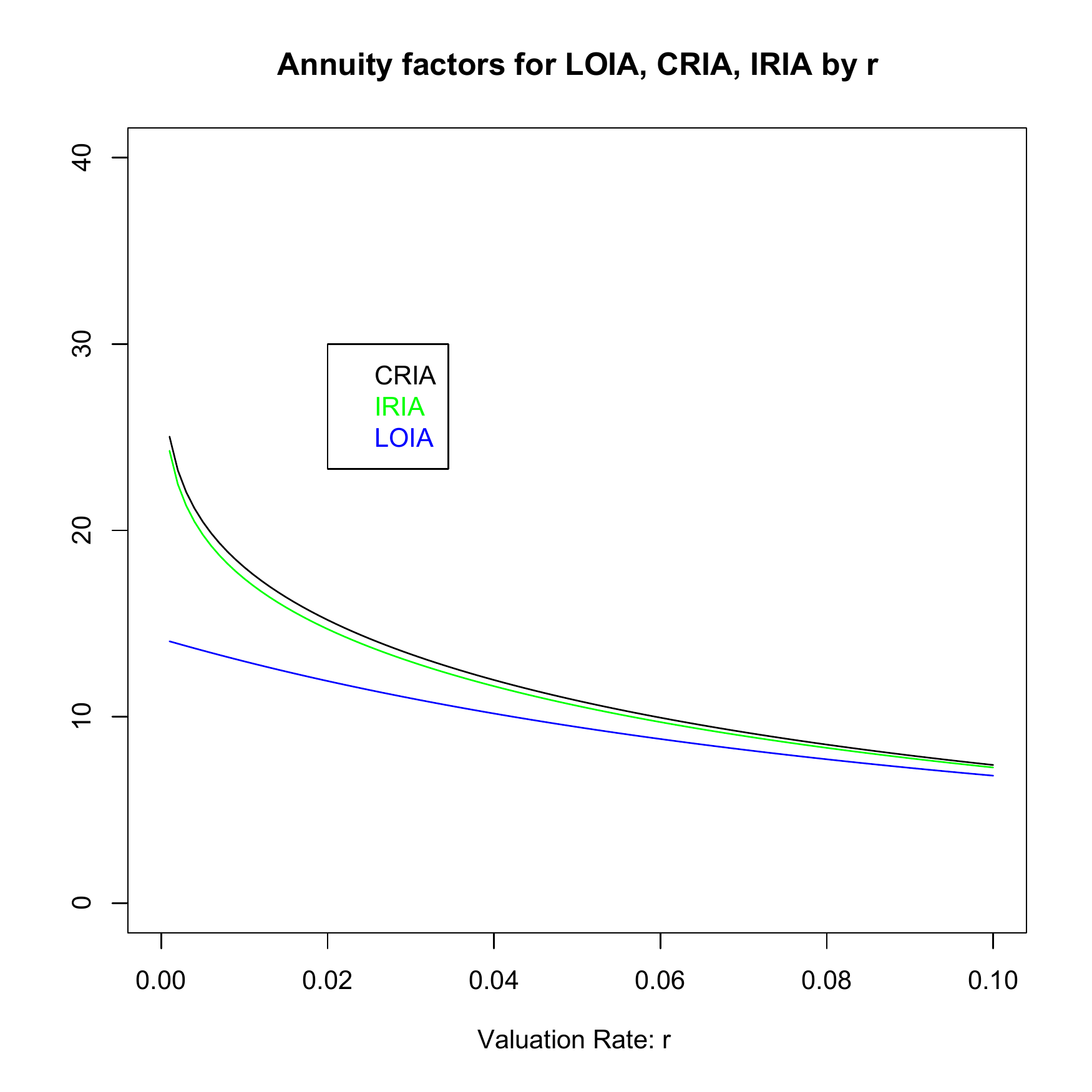} 
\caption{Income Annuity (IA) prices (unloaded) under a variety of valuation rates, assuming Gompertz mortality $(m=90, b=10)$ at age $x=65$ (left panel) and $x=75$ (right panel). At relatively high (historical) valuation rates there is little difference in prices. The refund has little impact. But at (current) low $r$, the difference can be substantial.}
\label{FIG1}
\end{center}
\end{figure}

\clearpage

\begin{figure}
\begin{center}
\includegraphics[width=0.95\textwidth]{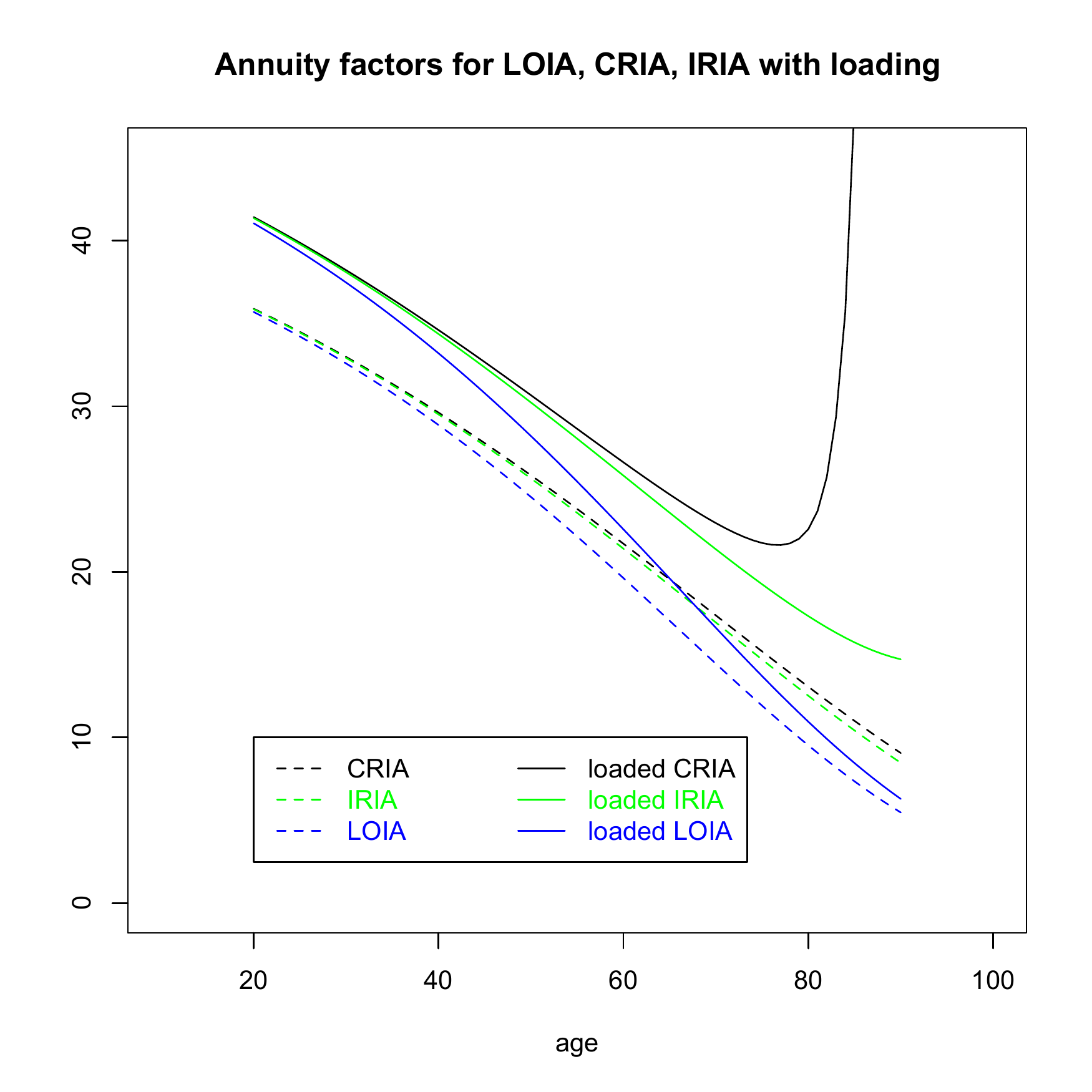} 
\caption{Income Annuity (IA) prices under a valuation rate: $r=2\%$, with Gompertz mortality $(m=90, b=10)$, comparing no loading with an insurance loading of $\pi=15\%$. Notice how the loaded CRIA price increases beyond (roughly) the age of $x=79$, and is no longer viable by (roughly) age $x=81$, in the above figure. }
\label{FIG2}
\end{center}
\end{figure}

\clearpage

\begin{figure}
\begin{center}
\includegraphics[width=0.45\textwidth]{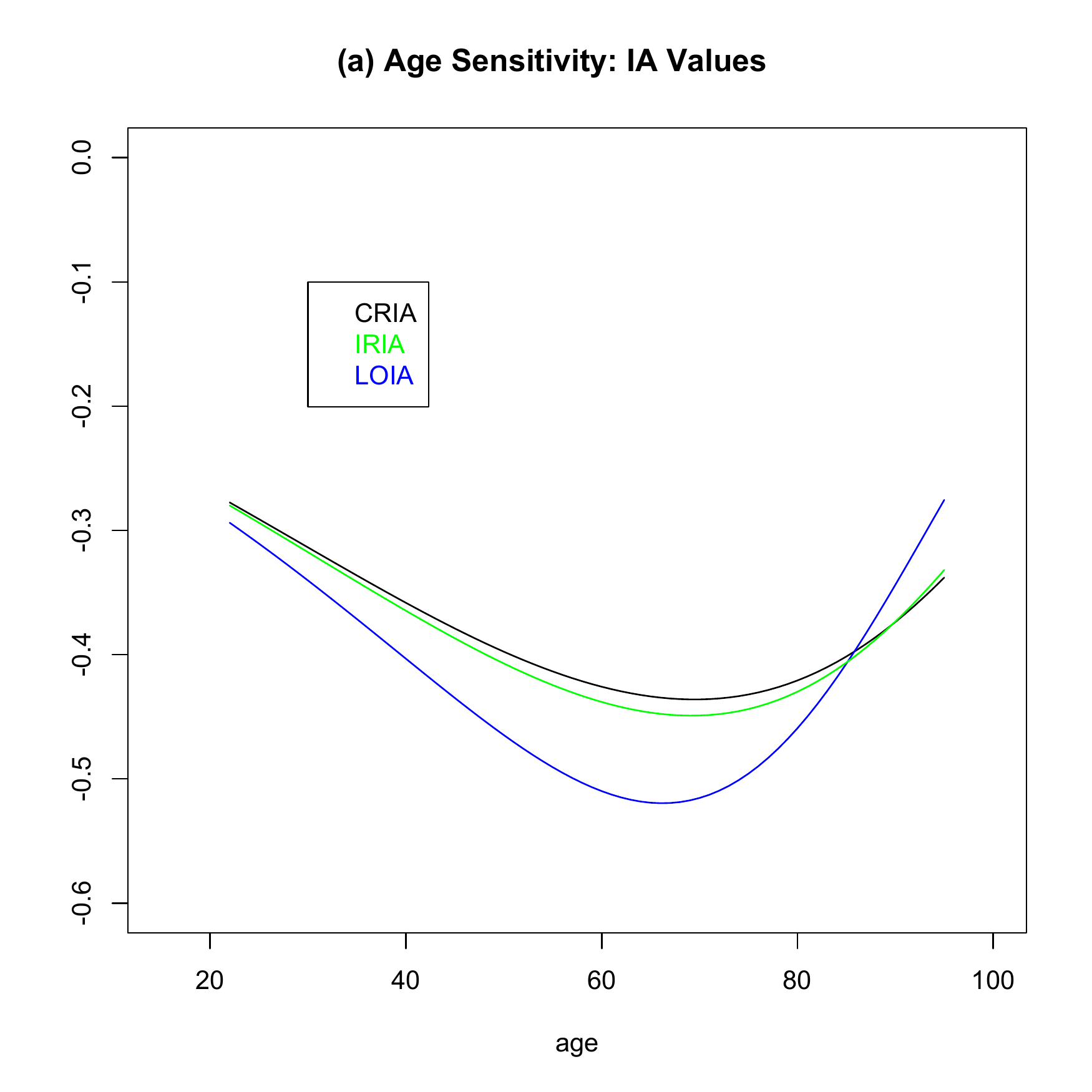} 
\includegraphics[width=0.45\textwidth]{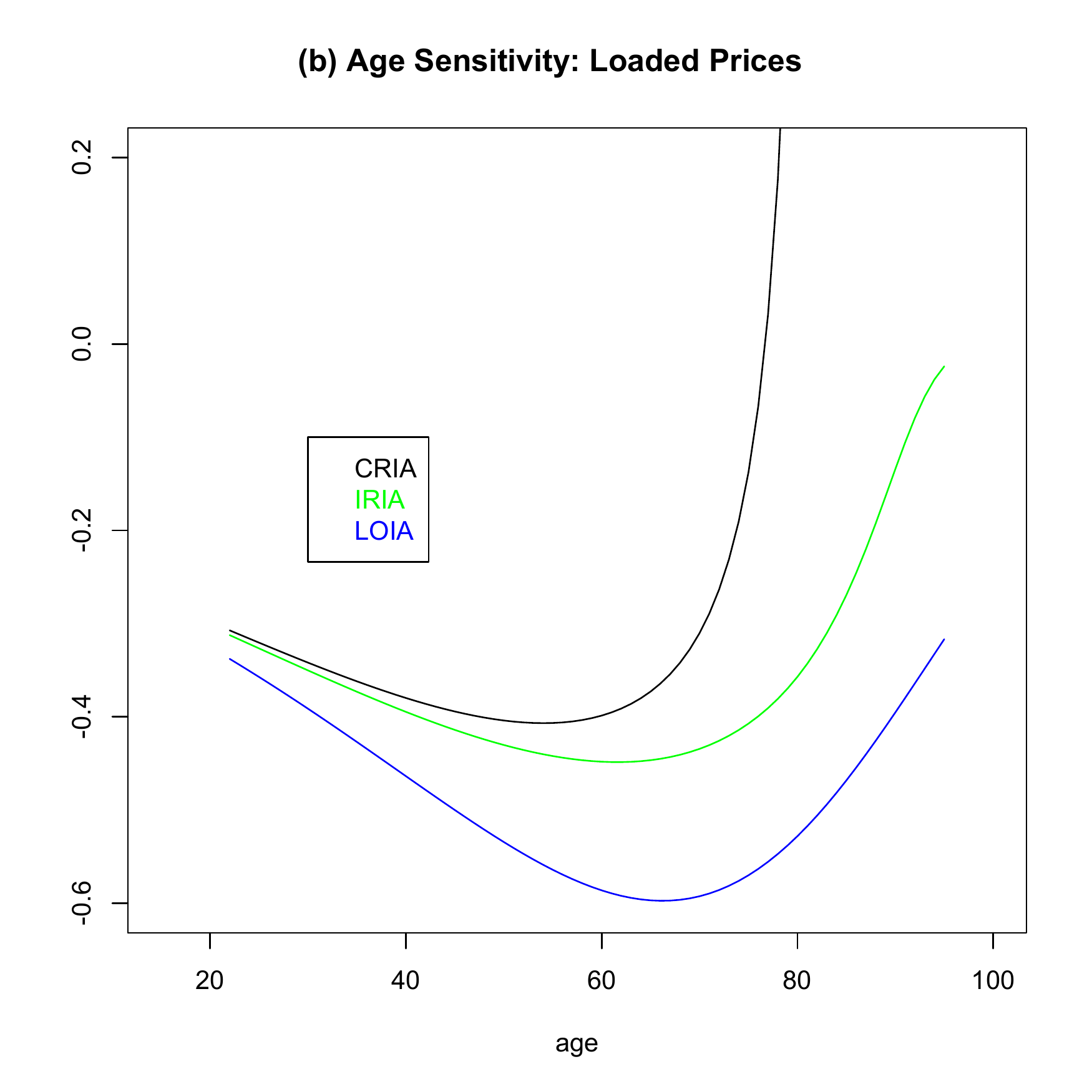} 
\caption{Sensitivity of Income Annuity prices to age $x$, technically $\frac{\partial}{\partial x}$, under a valuation rate of $r=2\%$ and Gompertz mortality $(m=90, b=10)$. The partial derivatives of unloaded IA prices are all negative since those pricess uniformly decline in age, but for the loaded $\pi=15\%$ CRIA prices, the numbers can be positive; aging increases the annuity price!}
\label{FIG3}
\end{center}
\end{figure}

\clearpage

\begin{figure}
\begin{center}
\includegraphics[width=0.45\textwidth]{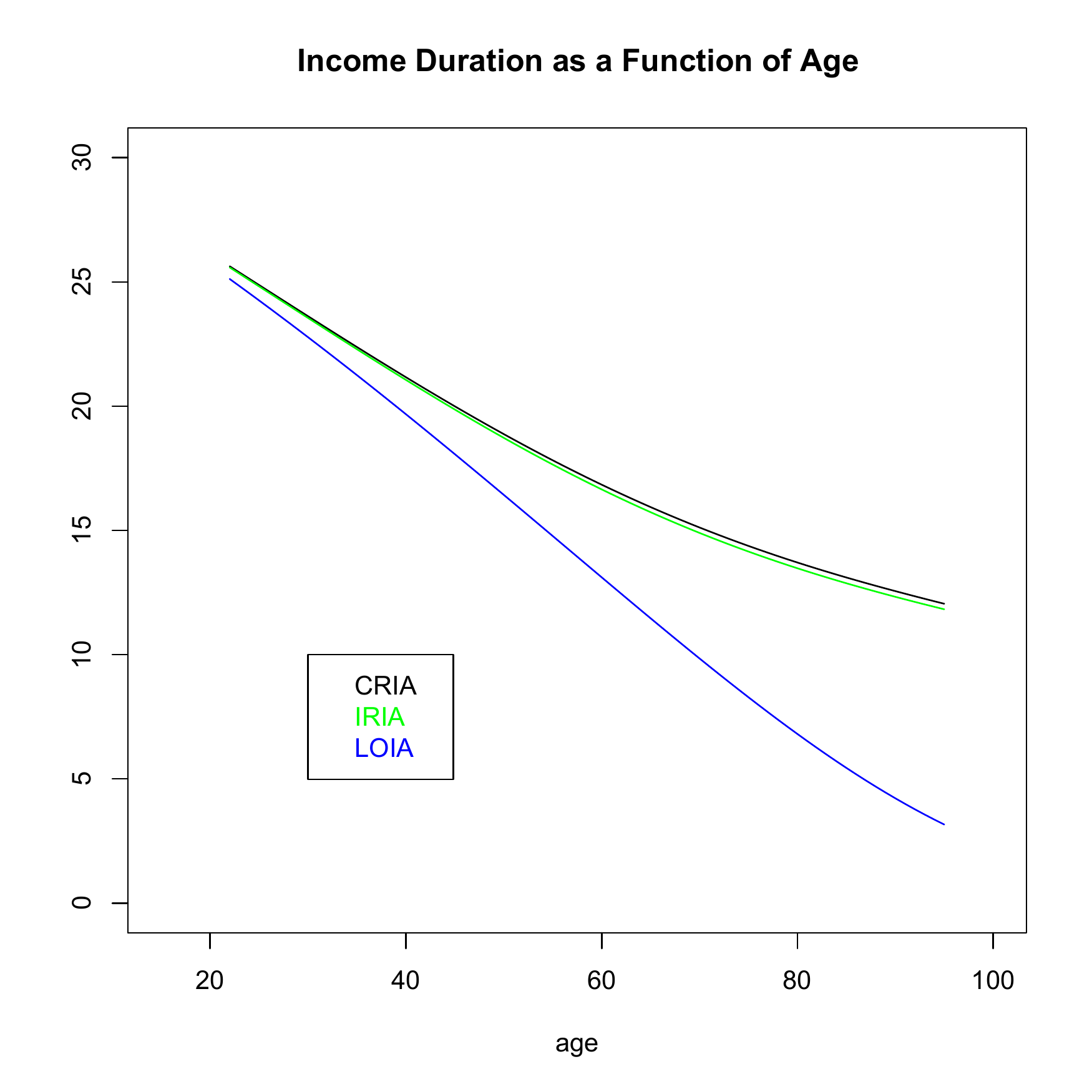} 
\includegraphics[width=0.45\textwidth]{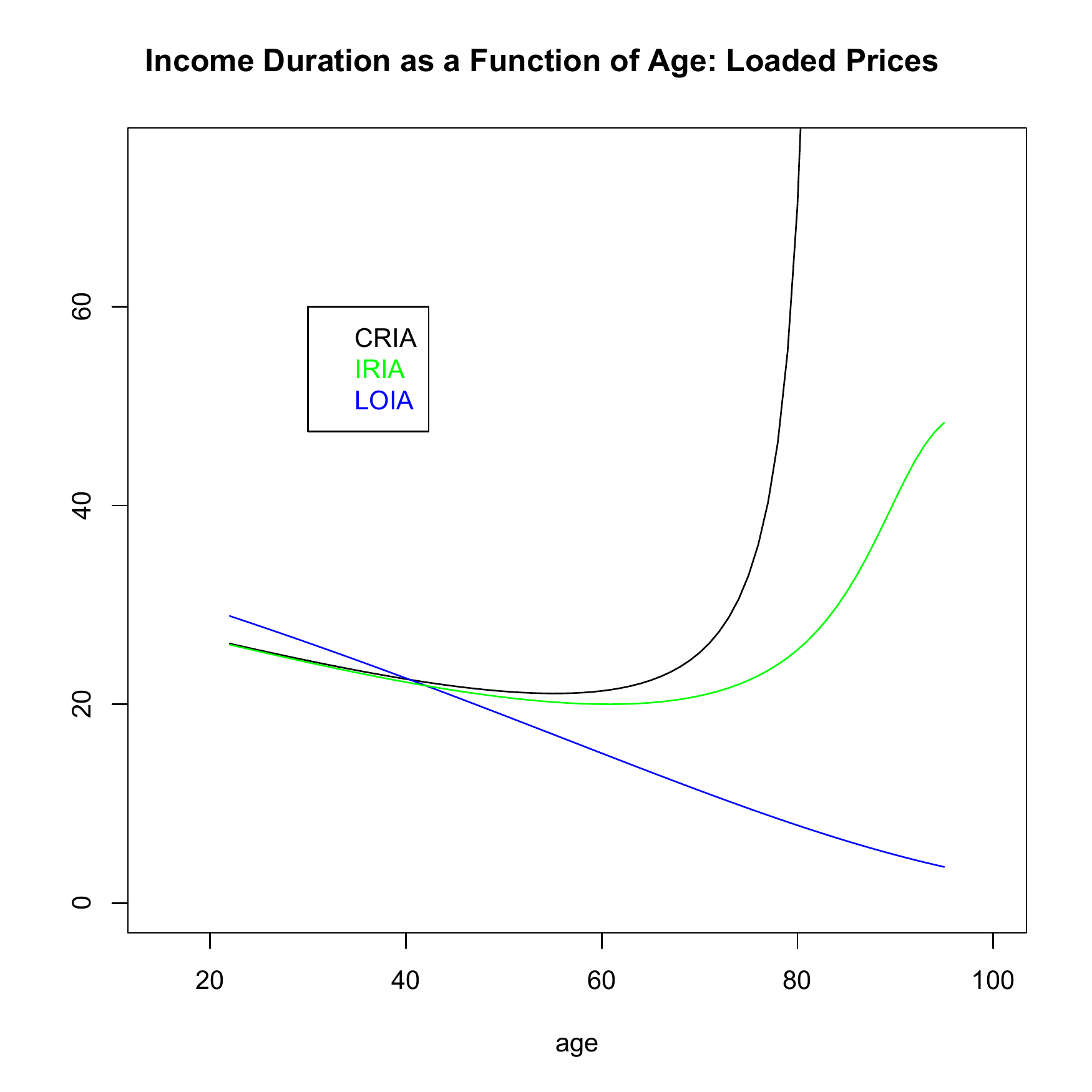} 
\caption{As defined and explained in the paper, {\em life annuity durations} at various issue ages, under a valuation rate of $r=2\%$, and Gompertz mortality $(m=90, b=10)$. The sensitivity of the unloaded actuarial price to interest rates is (much) higher for the cash-refund and instalment refund IA, especially at higher ages. For the loaded ($\pi=15\%$) IA, the situation is reversed at younger ages, and the life annuity duration of the cash-refund price ``blows up'' as $x$ reaches the critical age beyond 80.}
\label{FIG4}
\end{center}
\end{figure}

\end{document}